\documentclass[12pt]{article}

\usepackage{setspace}
\usepackage[
    a4paper,
    left=0.9in,
    right=0.9in,
    top=0.5in,
    bottom=1in
]{geometry}

\usepackage[utf8]{inputenc}
\usepackage[T1]{fontenc}
\usepackage{lmodern}

\usepackage{amsmath}
\usepackage{amsthm}
\usepackage{amssymb}
\usepackage{amsfonts}
\usepackage{mathtools}
\usepackage{bm}
\usepackage{bbm}

\usepackage{graphicx}
\usepackage{subcaption}
\usepackage{float}
\usepackage{booktabs}
\usepackage{multirow}
\usepackage{array}
\usepackage{siunitx}

\usepackage{algorithm}
\usepackage{algpseudocode}

\usepackage{tikz}
\usetikzlibrary{positioning,arrows.meta}

\usepackage{xcolor}
\usepackage{url}
\usepackage{csquotes}
\usepackage[shortlabels]{enumitem}
\usepackage{comment}
\usepackage{tcolorbox}

\usepackage[numbers,sort&compress]{natbib}

\usepackage{hyperref}

\hypersetup{
    colorlinks=true,
    linkcolor=blue,
    filecolor=magenta,
    urlcolor=cyan,
    citecolor=blue
}

\newcommand\numberthis{%
    \addtocounter{equation}{1}%
    \tag{\theequation}%
}

\newtheorem{theorem}{Theorem}
\newtheorem{lemma}[theorem]{Lemma}
\newtheorem{corollary}[theorem]{Corollary}

\theoremstyle{definition}

\theoremstyle{remark}

\newcommand{\R}{\mathbb{R}}
\newcommand{\X}{\mathcal{X}}
\newcommand{\bbS}{\mathbb{S}}
\newcommand{\balpha}{\boldsymbol{\alpha}}
\newcommand{\bbeta}{\boldsymbol{\beta}}

\DeclareMathOperator{\tr}{\text{tr}}

\title{
    Quantification and Decomposition of Uncertainty Using
    Sliced-Normal Distribution -- With Applications to NASA Data
}

\author{
Arindam RoyChowdhury%
\thanks{
Industrial Engineering and Operations Research, Columbia University,
\href{mailto:arindam.roychowdhury@columbia.edu}
{ar4445@columbia.edu}
}
\and
Luis G. Crespo%
\thanks{
NASA Langley Research Center, Vehicle Dynamics \& Controls Branch,
\href{mailto:luis.g.crespo@nasa.gov}
{luis.g.crespo@nasa.gov}
}
\and
Henry Lam%
\thanks{
Industrial Engineering and Operations Research, Columbia University,
\href{mailto:henry.lam@columbia.edu}
{henry.lam@columbia.edu}
}
}

\date{}

\begin{document}

\maketitle

\hrule

\begin{abstract}
Modeling multivariate distributions with nonlinear dependence, multimodality, and tractable analytical structure for downstream applications is a central challenge in uncertainty quantification. Sliced Normal (SN) distributions were introduced in prior works at the National Aeronautics and Space Administration (NASA) to address this need by representing densities through polynomial feature maps. This construction provides a compact algebraic alternative to more opaque generative models, while retaining the ability to capture nonlinear parameter dependencies and multi-modal behavior. In this paper, we build on the SN framework and develop several improvements that make the approach more reliable and scalable. First, we reformulate SN parameter estimation as a convex optimization problem over a positive semidefinite matrix, replacing the original nonconvex likelihood search with a formulation amenable to standard optimization tools. Second, we clarify the expressive power of the SN class by connecting polynomial log-density modeling to a Stone--Weierstrass-type universal approximation argument on compact domains. Third, we propose a high-dimensional fitting procedure that partitions variables into approximately independent groups, fits SN models within each subgroup, and then assembles the subgroup models through a cross-block completion step to recover residual dependence. We demonstrate the resulting SN modeling pipeline on NASA loss-of-control flight data, where the method captures nonlinear dependence patterns in both low-dimensional slices and a higher-dimensional block-assembled model.

\end{abstract}

\vspace{6pt}

\noindent\textbf{Keywords:}
Sliced-normal distributions;
uncertainty quantification;
multivariate density estimation;
convex optimization;
polynomial feature lifting;
nonlinear dependence modeling.

\vspace{10pt}


\section{Introduction}\label{sec:Intro}

A central challenge in data science problems is to build probabilistic models that are not only flexible enough to describe complex data, but also structured enough to support downstream analysis, optimization, and decision-making. To address these challenges, Sliced Normal (SN) distributions have been developed at the National Aeronautics and Space Administration (NASA) and used as a practical modeling tool for uncertainty quantification \citep{Crespo_SN_First, sliced_exponential, SN_application}. These applications require more than a black-box density estimate: the fitted distribution must remain interpretable, computationally tractable, and useful for tasks such as identifying high-density regions, failure regions, most probable points, and uncertainty decompositions.

An important example where this modeling need arises is the NASA Langley flight dataset studied in this paper, which comes from experiments related to in-flight loss of control (LOC). LOC is the largest fatal accident category for commercial jet airplane accidents worldwide \citep{Belcastro11}. Broadly, LOC refers to aircraft motion outside the normal operating flight envelope that is not predictably altered by pilot commands, and is often driven by strong nonlinear effects and coupling \citep{Belcastro11,Crespo12}. Such events are frequently characterized by disproportionately large responses to small perturbations in the vehicle state, including oscillatory or divergent behavior. In particular, the uncommanded angular rates that emerge in these situations can severely compromise the ability to maintain heading, altitude, and wings-level flight. To study this phenomenon, flight experiments were conducted using the Generic Transport Model (GTM), a \(5.5\%\) dynamically scaled, remotely piloted, twin-turbine aircraft. The dataset used here comes from flights conducted under critical upset conditions. Although these flights are nominally identical, the recorded responses exhibit substantial variability, reflecting the complex and nonlinear nature of the underlying dynamics. 

These features motivate the use of the SN framework, which provides a compact and tractable representation for modeling nonlinear dependence in multivariate data. The appeal of this framework becomes clearer when viewed against the limitations of existing distributional modeling approaches. Copula-based methods decompose a joint distribution into marginal models and a dependence model, and structured variants such as vine copulas can provide substantial flexibility in moderate dimensions \citep{nelsen2006introduction}. In higher dimensions, structured constructions like vine copulas—including C-vine and R-vine formulations—offer flexible tools for building multivariate models from simpler pairwise relationships \citep{aas2009pair, czado2019analyzing}. Despite their flexibility, the performance of copula models is highly sensitive to the choice of copula family. Moreover, standard families often fall short in accurately capturing nonlinear or higher-order interactions, limiting their expressiveness in complex, real-world applications. Recent advances in generative modeling have introduced highly flexible approaches for representing complex, high-dimensional distributions. In particular, Energy Based Models (EBMs) and Normalizing Flows have emerged as powerful tools. EBMs specify a density through an unnormalized form
\[
p_\theta(x) \propto \exp\!\big(-E_\theta(x)\big),
\]
where a neural network typically parameterizes the energy function $E_\theta(x)$ \citep{lecun2006tutorial,song2021score}. Normalizing flows, on the other hand, construct expressive densities by transforming a simple base distribution through a sequence of invertible mappings, allowing exact likelihood evaluation via change-of-variables formulas \citep{rezende2016variationalinferencenormalizingflows,papamakarios2021normalizing}. Both frameworks are capable of modeling highly nonlinear, multimodal, and high-dimensional data distributions. However, these neural generative models generally produce distributions that are \emph{mathematically intractable}. Their density landscapes are defined implicitly through deep networks, making it difficult to analyze structural properties such as the number or location of modes. Moreover, optimization tasks such as computing the most probable point (MPP) or solving downstream polynomial optimization problems under these distributions become analytically infeasible. Even basic probabilistic quantities often require costly sampling-based approximations.

To overcome these limitations, SN distributions were introduced by \cite{Crespo_SN_First} as a flexible parametric family capable of approximating complex multivariate densities. The key idea behind SN distributions is to map original data into a higher-dimensional feature space using polynomial transformations. In this lifted space, a quadratic form is used to construct a log-density, thereby generalizing Gaussian models defined over polynomial features. This construction allows SN distributions to capture multimodal structures, asymmetric shapes, and potentially non-linear dependencies using a low-dimensional representation in the feature space. Unlike kernel density estimation or Gaussian mixture models, using the original formulation of SNs do not require manual specification of component counts or bandwidths, and their compact polynomial representation is particularly well-suited for downstream activities like identifying level sets and failure regions.

SN distributions provide a structured alternative within the broader class of energy-based models. In SN models, the energy takes the form
\[
E_\theta(x) = \sum_{k=1}^m \big(f_k(x)\big)^2,
\]
where each $f_k$ is a polynomial. Thus, the log-density is a negative sum of squared polynomials. This algebraic structure provides mathematical tractability absent in neural EBMs: the energy landscape is polynomials, enabling the use of tools from polynomial optimization and algebraic analysis. In particular, gradients and higher-order derivatives are available in closed form, allowing calibration via score matching using analytical gradients rather than stochastic approximations.

Furthermore, once an SN model is estimated, its polynomial parameterization permits studying how the distribution evolves as parameters vary continuously. This enables sensitivity analysis and exploration of parameter dependence in a continuum—an analysis that is generally infeasible for neural generative models whose structure is implicit and non-algebraic.

However, learning SN parameters poses significant computational challenges due to the non-convex nature of the underlying optimization problem, which involves maximizing a log-likelihood over the cone of positive semi-definite matrices. Moreover, the interpretability and identifiability of parameters become more fragile as the degree of the feature mapping increases.

In this work, we introduce several key advancements to the theory and application of SN distributions, leveraging both optimization methodologies and statistical analysis:

\begin{enumerate}
\item \textbf{Convex Formulation:} Recognizing the computational challenges of the original non-convex parameter estimation problem, we propose a convex reformulation of the optimization task. This enables us to leverage highly optimised solvers like MANOPT or CVXPY for fast parameter estimation. 

\item \textbf{Consistency:} We justify the consistency of SN distributions by relating to the Stone–Weierstrass theorem, establishing the dense approximation capability of polynomial-based families for modeling continuous log-density functions on compact sets. 

\item \textbf{High-Dimensional Applications:} We introduce a practical strategy for extending SN models to high-dimensional settings. Our approach begins by identifying approximately independent variable groups within the dataset. We then fit separate SN models to each subgroup and subsequently combine them to construct a global model. This modular decomposition allows us to improve computational efficiency while still capturing the joint structure, and we further refine the combined model by optimizing its overall likelihood.
\end{enumerate}

Our contributions reinforce the theoretical underpinnings of SN distributions and enhance their computational tractability. The structure of the paper is as follows. In Section~\ref{sec: SN_recap}, we review the original formulation of SN models and summarize the key foundational ideas. Section~\ref{sec: Convex_reformulation} introduces a convex reformulation of the parameter estimation problem, along with algorithmic details for its implementation. Additionally, we analyze consistency properties of the SN model class. Section~\ref{sec: approximate_procedures} outlines practical approximate methods for fitting SN models in high-dimensional settings, including a dimension-reduction strategy based on identifying approximately independent subgroups. Section~\ref{sec: numerical_example} presents numerical experiments that illustrate the empirical benefits of our approach. Finally, Section~\ref{sec: limitaion} discusses known limitations of the SN framework and outlines directions for future work.

\section{Sliced Normal Distributions}\label{sec: SN_recap}
We begin by introducing the Sliced Normal (SN) distribution for a random variable \( X \) taking values in a compact subset \( \mathcal{X} \subset \mathbb{R}^m \). Let \( \tilde Z(x) \) represent the vector of all monomials of \( x \) of degree less than or equal to $d$. To formalize, for \( d \in \mathbb{N}_0 \) set 
\begin{equation}\label{eqn: set_of_powers}
    \Lambda_d := \{ (\alpha_1, \ldots, \alpha_m) \in \mathbb{N}_0^m \mid 1 \leq \alpha_1 + \cdots + \alpha_m \leq d \}
\end{equation}

Then, $\tilde Z(x)$ is the $|\Lambda_d|\times 1$ vector:
$$\tilde Z(x)^{\top} = (x^{\alpha^{(1)}},..., x^{\alpha^{(|\Lambda_d|)}})$$
Where $|\Lambda_d|$ denote the cardinality of the set $\Lambda_d$ and $\alpha^{(i)}$ are the elements of $\Lambda_d$. For instance, when \( m = 3 \) and \( d = 2 \), we have:
\begin{equation}\label{eq:feature_definition}
    x^\top = (x_1, x_2, x_3) \in \mathbb{R}^3 \quad \text{and} \quad \tilde Z(x)^\top = (x_1^2, x_2^2, x_3^2, x_1x_2, x_1x_3, x_2x_3, x_1, x_2, x_3) \in \mathbb{R}^{9}.
\end{equation}

The range of \( \tilde Z(x) \), denoted as \( \mathcal{\tilde Z} \), is defined as:
\[
\mathcal{\tilde Z} := \{\tilde Z(x) : x \in \mathcal{X}\}.
\]
Thus, the mapping \( \tilde Z(\cdot) \colon \mathcal{X} \to \mathcal{\tilde Z} \) satisfies \( \mathcal{\tilde Z} \subset \mathbb{R}^{m_z} \), where \( m_z = |\Lambda_d| =  \binom{m+d}{d} -1\). The quantity \( \tilde Z(x) \) is referred to as the \textbf{features} of \( x \), and the space \( \mathcal{\tilde Z} \) is called the \textbf{feature space}. For convenience, we will henceforth denote \( \tilde Z(x) \) as \( \tilde Z_x \).

Let \( f(x) \) denote the Probability Density Function (PDF) of the random variable \( X \). The SN distribution is parameterized by two quantities, \( (\mu, P) \). Here, $\mu$ is a $m_z\times 1$ vector and $P$ is a $m_z\times m_z$ positive definite matrix, called the precision matrix. We say \( X \sim \text{SN}(\mu, P) \) if:
\begin{equation}\label{eq:sn_pdf_old}
    f(x| \mu,P) = \frac{1}{c(\mu, P)} \exp\left(- \frac 12(\tilde Z_x - \mu)^\top P (\tilde Z_x - \mu)\right),
\end{equation}
where \( c(\mu, P) \) is a normalization constant ensuring \( \int_{\mathcal{X}} f(x|\mu,P) \, dx = 1 \).

Observe that the PDF in \eqref{eq:sn_pdf_old} closely resembles that of a multivariate normal distribution. Equivalently, it can be expressed as:
\[
f(x|\mu,P) \propto \phi_{\mu, P}(\tilde Z_x),
\]
where \( \phi_{\mu, P}(\tilde Z_x) \) is the PDF of a multivariate normal distribution with mean \( \mu \) and precision matrix \( P \), at the point $\tilde Z_x$.

\subsection{Parameter Estimation for Standard SNs}\label{sec: SN parameter estimation}
The task here is to estimate the parameters $\mu$ and $P$ given a data $X_1,...,X_n$. A very common approach to this, which is also used in \citep{Crespo_SN_First} is the Maximum Likelihood Estimation  (MLE). Where, the optimal parameters $\mu^*,P^*$ is denoted by:
$$(\mu^*,P^*):= \arg\max_{\mu,P} \ \prod_{i = 1}^n f(x_i|\mu,P)$$
Equivalently, it is customary to maximize the log likelihood, instead of the above:
\begin{align*}
    (\mu^*,P^*) &:= \arg\max_{\mu,P} \ \sum_{i = 1}^n  \log \left(f(x_i|\mu,P)\right)\\
    &= \arg\max_{\mu,P} \ -\frac 12\sum_{i = 1}^n  (\tilde Z_{x_i} - \mu)^\top P (\tilde Z_{x_i} - \mu) - n\log c(\mu,P)
\end{align*}
subject to the condition $P\succeq 0$.
Generally, a closed form solution to the above is not available, and numerical methods are employed. At the same time, it can be shown that the above optimization problem is not convex. Further, the optimization above requires evaluation of the integration constant $c(\mu,P)$, for which a closed form is unavailable. This is computed through Monte Carlo integration, a practice that is computationally expensive and limits scaling to higher dimensions. Thus, it is difficult to guarantee global optima. A proposed starting point for the numerical search of the above, also known as the ``Feature Space MLE'' or FMLE  is denoted by the moments of the features $\tilde Z_{x_i}$:
\begin{equation}\label{FMLE_mu}
    \hat\mu := \frac{1}{n}\sum_{i=1}^n \tilde Z_{x_i}
\end{equation}

and 
\begin{equation}\label{FMLE_P}
    \hat P := \left[\frac{1}{n-1}\sum_{i=1}^n(\tilde Z_{x_i}-\hat \mu)(\tilde Z_{x_i}-\hat \mu)^T\right]^{-1}
\end{equation}

\section{Convex Reformulation}\label{sec: Convex_reformulation}
The SN density function can be expressed as an exponential polynomial, i.e.,
\[
f(x) \propto e^{-g(x)},
\]
where \( g(x) \) is a polynomial in \( x \). However, we impose specific structural constraints on \( g(x) \). In particular, the positive definiteness of the precision matrix \( P \) ensures that \( g(x) \) is a sum of squares (SOS) polynomial. Next we explore a function class that includes all polynomials.

In a generalized setup, \( g(x) \) can be expressed as:
\[
g(x) = \theta_0 + \theta^\top \tilde Z_x,
\]
where \( \theta_0 \) is a scalar, and \( \theta \) is a vector of size \( m_z \times 1 \). Similar to the sliced exponential distribution introduced in \cite{sliced_exponential}, parameter estimation in this framework becomes a convex optimization problem. Consequently, the global optimum can be efficiently obtained using standard, well-established algorithms.

Despite its computational advantages, this generalized class of functions introduces significant challenges. A necessary condition for the integration constant to be finite is
\[
\lim_{\|x\| \to \infty} f(x) = 0.
\]
This property is not guaranteed if \( g(x) \) is an arbitrary polynomial. One might argue that this limitation is negligible because the domain \( \mathcal{X} \) is a compact subset of \( \mathbb{R}^m \). However, allowing arbitrary polynomials makes the model highly susceptible to outliers. Specifically, points near the boundary of \( \mathcal{X} \) can disproportionately inflate the estimated PDF in those regions, leading to severe non-robustness.

To mitigate these issues, one might consider restricting \( g(x) \) to be a positive polynomial. A multivariate polynomial is positive if it takes non-negative values over its domain. For example, a polynomial that is a square, or a non-negative linear combination of squares (SOS), is positive, and this representation serves as an algebraic certificate of its positivity. Unfortunately, Hilbert showed that a polynomial which is positive on all of real affine n-space, where $n > 1$, is not necessarily SOS. In fact, current literature does not provide any computationally tractable certificate of positive polynomials in higher dimensions. Given these constraints, we restrict \( g(x) \) to be an SOS polynomial. This restriction ensures computational tractability while providing a robust framework for parameter estimation and model fitting, due to the following lemma \ref{lemma: SOS_characterization}. Before proceeding, we slightly modify the notation introduced in section \ref{sec: SN_recap}. We redefine the set $\Lambda_d$ from eqn \ref{eqn: set_of_powers} to also contain the constant $\alpha = (0,...,0)$:
\begin{equation}\label{eqn: set_of_powers_new}
    \Lambda_d := \{ (\alpha_1, \ldots, \alpha_m) \in \mathbb{N}_0^m \mid  \alpha_1 + \cdots + \alpha_m \leq d \}
\end{equation}
Now, $m_z = |\Lambda_d| = \binom{m+d}{d}$. 
 Correspondingly, we also update $\tilde Z(x)$ to be the $|\Lambda_d|\times 1$ vector:
 \begin{equation}\label{eqn: feature_definition_new}
     Z(x)^{\top} = (x^{\alpha^{(1)}},..., x^{\alpha^{(|\Lambda_d|)}})
 \end{equation}

 Thus, $Z(x)$ now contains the constant term $1$. For example, for $m=3$ and $d=2$,  eqn. \ref{eq:feature_definition} now becomes:

 \begin{equation}
    x^\top = (x_1, x_2, x_3) \in \mathbb{R}^3 
\end{equation}
and
\begin{equation}
        Z(x)^\top = (1,x_1^2, x_2^2, x_3^2, x_1x_2, x_1x_3, x_2x_3, x_1, x_2, x_3) \in \mathbb{R}^{10}
\end{equation}

Then, we have the following lemma:
\begin{lemma}\label{lemma: SOS_characterization}    
    \citep{Powers1998} Suppose \( g \in \R[x] \) is a polynomial of degree \( 2k \) and \( Z_x \) is a vector of monomials of $x$ as in  (\ref{eqn: feature_definition_new}). Then \( g \) is a SOS polynomial if and only if there exists a real, symmetric, positive semidefinite matrix \( B \) such that
\[
g(x) = Z_x^\top B Z_x.
\]

Given such a matrix \( B \) of rank \( t \), we can construct polynomials \( h_1, \dots, h_t \) such that 
\[
g = \sum_{i=1}^t h_i^2
\]

\end{lemma}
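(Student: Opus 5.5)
We prove the two directions of the equivalence separately and then read off the explicit decomposition from a factorization of \(B\). Throughout, \(Z_x\) denotes the vector of all monomials of degree at most \(k\), i.e.\ (\ref{eqn: feature_definition_new}) with \(d=k\) where \(\deg g = 2k\). Every polynomial of degree at most \(k\) is then a linear combination \(c^\top Z_x\) with \(c\in\R^{m_z}\).

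\textbf{(\(\Leftarrow\)) and the explicit decomposition.} Suppose \(g(x)=Z_x^\top B Z_x\) with \(B\) symmetric positive semidefinite of rank \(t\).
\begin{itemize}
\item By the spectral theorem, write \(B=\sum_{i=1}^t \lambda_i v_iv_i^\top\) with \(\lambda_i>0\) and \(v_i\) orthonormal.
\item Equivalently, \(B=L^\top L\), where \(L\in\R^{t\times m_z}\) has rows \(\sqrt{\lambda_i}\,v_i^\top\).
\item Set \(h_i(x):=\sqrt{\lambda_i}\,v_i^\top Z_x\), a polynomial of degree at most \(k\).
\end{itemize}
Then
\[
g(x)=(LZ_x)^\top (LZ_x)=\|LZ_x\|^2=\sum_{i=1}^t h_i(x)^2 .
\]
This shows \(g\) is SOS and gives the second claim of the lemma with exactly \(t=\operatorname{rank}B\) squares. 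A Cholesky or \(LDL^\top\) factorization would serve equally well in place of the spectral decomposition.

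\textbf{(\(\Rightarrow\)).} Suppose \(g=\sum_{j=1}^s q_j^2\) for some polynomials \(q_j\). The main step is a degree bound: each \(q_j\) has degree at most \(k\).
\begin{itemize}
\item Let \(D=\max_j \deg q_j\).
\item For each \(j\) with \(\deg q_j = D\), let \(p_j\) be the top homogeneous component of \(q_j\).
\item The degree-\(2D\) homogeneous part of \(g\) is \(\sum_j p_j^2\), summed over those \(j\).
\item This form is not identically zero: pick a point \(x_0\) with \(p_{j_0}(x_0)\neq 0\) for some \(j_0\). Every term is nonnegative and one is strictly positive, so \(\sum_j p_j(x_0)^2>0\).
\end{itemize}
Hence \(2D=\deg g=2k\), so \(D=k\). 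This no-cancellation-of-leading-terms argument is where I expect the only real subtlety to lie; it relies on working over \(\R\), where a sum of squares of real numbers vanishes only if each vanishes.

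Given the degree bound, write \(q_j=c_j^\top Z_x\) with \(c_j\in\R^{m_z}\) and set \(B:=\sum_j c_jc_j^\top\). This \(B\) is symmetric, and it is positive semidefinite because \(y^\top By=\sum_j (c_j^\top y)^2\ge 0\) for every \(y\). Finally,
\[
Z_x^\top B Z_x=\sum_j (c_j^\top Z_x)^2=g(x),
\]
which completes the equivalence.

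A remark for the paper's use: \(B\) is generally not unique, because distinct products of monomials can coincide (e.g.\ \(x_1^2\cdot 1 = x_1\cdot x_1\)). The lemma asserts only existence, so this causes no issue. For the same reason, the identity \(g(x)=Z_x^\top BZ_x\) is understood as a polynomial identity, not a matching of matrix entries.
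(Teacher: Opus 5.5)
Your proof is correct and takes essentially the same approach as the paper: the paper gives no proof of its own and imports the lemma from Powers and W\"ormann, whose argument is exactly this Gram-matrix construction. That argument bounds the degrees of the squares by $k$ via the leading-form argument, stacks their coefficient vectors into $B=\sum_j c_jc_j^\top$, and conversely factors $B=L^\top L$ with $t=\operatorname{rank}B$ rows to read off the $h_i$; your reading of $Z_x$ with $d=k$ is also the intended one.
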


Motivated by the above lemma, we present our new (equivalent) version of the SN density:

\begin{equation}\label{eq:sn_pdf}
    f(x|B) = \frac{1}{c(B)} \exp\left(-Z_x^\top B Z_x\right),
\end{equation}
where, $c(B)$ is the normalization constant. Specifically,
\begin{equation}\label{eq: c_B defn}
    c(B) := \int_X \exp \left(-Z_x^\top B Z_x\right) dx
\end{equation}
Observe that the top left element of $B$, say $B_{(0,0)}$ will contribute to a constant in the pdf, and will be absorbed in $c(B)$. 

The only difference between (\ref{eq:sn_pdf}) and (\ref{eq:sn_pdf_old}) is that we replace $\mu$ by an additional row and column in the $P$ matrix, to call it the $B$ matrix, and our $Z_x$ now has the additional constant term $1$.

To complete the argument, we have the following result:
\begin{theorem}\label{thm:equivalent}
    The two formulations of SN densities (\ref{eq:sn_pdf_old}) and (\ref{eq:sn_pdf}) are equivalent. In other words, any density function $f(x)$ that can be written as (\ref{eq:sn_pdf_old}) for some $(\mu,P)$ can also be expressed as (\ref{eq:sn_pdf}) for some PSD matrix $B$ and vice versa. 
\end{theorem}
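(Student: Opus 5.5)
The plan is to prove both directions by an explicit algebraic identity between the two quadratic forms. We use that $Z_x^\top = (1,\tilde Z_x^\top)$, ordering the constant monomial first as in (\ref{eqn: feature_definition_new}). Any additive constant in the exponent is absorbed into the normalization constant, and since $\mathcal{X}$ is compact both $c(\mu,P)$ and $c(B)$ are finite and positive. So it suffices to match the exponents up to an additive constant.

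\textbf{From $(\mu,P)$ to $B$.} Given $(\mu,P)$ with $P\succeq 0$, define the $m_z\times(m_z-1)$ block matrix $L=\begin{pmatrix}-\mu^\top\\ I\end{pmatrix}$, so that $L^\top Z_x=\tilde Z_x-\mu$. Set
\[
B:=\tfrac12 L P L^\top=\tfrac12\begin{pmatrix}\mu^\top P\mu & -\mu^\top P\\ -P\mu & P\end{pmatrix}.
\]
Then $Z_x^\top B Z_x=\tfrac12(\tilde Z_x-\mu)^\top P(\tilde Z_x-\mu)$ identically in $x$. Moreover $B$ is symmetric PSD, being a congruence of $P$. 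Hence the density (\ref{eq:sn_pdf_old}) equals (\ref{eq:sn_pdf}) with this $B$ and $c(B)=c(\mu,P)$.

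\textbf{From $B$ to $(\mu,P)$.} Given a PSD $B$, partition it conformally as $B=\begin{pmatrix} b & v^\top\\ v & Q\end{pmatrix}$ with $Q\succeq0$. Then
\[
Z_x^\top BZ_x=b+2v^\top\tilde Z_x+\tilde Z_x^\top Q\tilde Z_x .
\]
We complete the square. If $Q$ is invertible, take $P=2Q$ and $\mu=-Q^{-1}v$. This gives $Z_x^\top BZ_x=\tfrac12(\tilde Z_x-\mu)^\top P(\tilde Z_x-\mu)+(b-v^\top Q^{-1}v)$, and the final constant is absorbed into the normalization constant. So $f(x|B)=f(x|\mu,P)$.

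\textbf{Main obstacle: singular $Q$.} I expect the main difficulty to be the case where $Q$ is singular. Here I would use the standard fact that for a PSD block matrix, $v\in\operatorname{range}(Q)$; this follows by testing $B$ against vectors $(t,w)$ with $w\in\ker Q$, which forces $w^\top v=0$. Then $QQ^+v=v$, where $Q^+$ is the Moore--Penrose pseudoinverse. Taking $\mu=-Q^+v$ and $P=2Q$, the same completion of the square goes through, with the constant $b-v^\top Q^+v\ge 0$ given by the generalized Schur complement.

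This yields $P\succeq0$, matching the constraint used in the estimation problem of Section~\ref{sec: SN parameter estimation}. I would state explicitly that the equivalence is between $\{(\mu,P):P\succeq0\}$ and PSD matrices $B$. The subclass with $P\succ0$ corresponds exactly to PSD $B$ whose lower-right block $Q$ is positive definite.
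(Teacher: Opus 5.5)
Your proposal is correct and follows the paper's proof essentially step for step. In the forward direction you use the same block matrix $B(\mu,P)=\tfrac12\begin{pmatrix}\mu^\top P\mu & -\mu^\top P\\ -P\mu & P\end{pmatrix}$; in the converse you set $P=2Q$ and $\mu=-Q^{+}v$, use the range condition $QQ^{+}v=v$, and complete the square, exactly as the paper does. The only differences are minor and both help. First, the paper gets the range condition from the generalized Schur complement lemma in the appendix, whereas you verify it directly. Second, your congruence form $B=\tfrac12 LPL^\top$ proves the positive semidefiniteness that the paper simply asserts ``by construction,'' and your remark that the converse in general only gives $P\succeq 0$ makes precise the sense in which the equivalence holds.
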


 From the above theorem, we will use $B(\mu,P)$ to denote the parameter $B$ in the new formulation \ref{eq:sn_pdf_old} for a corresponding $(\mu,P)$ in the previous formulation. A direct corollary of the above is that:

\begin{corollary}
    The class of SN distributions is identical to the class of distributions of the form $f(x)\propto e^{-g(x)}$ where $g(x)$ is a SOS polynomial.
\end{corollary}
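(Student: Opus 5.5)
The plan is to prove the two inclusions separately, using Theorem~\ref{thm:equivalent} to pass between the $(\mu,P)$ and $B$ parameterizations and Lemma~\ref{lemma: SOS_characterization} to pass between PSD Gram matrices and SOS polynomials. Throughout, ``the class of SN distributions'' means the union over all feature degrees $d\in\mathbb{N}$. Normalizability is automatic: $\mathcal{X}$ is compact and $e^{-g}$ is continuous and positive, so $0<c<\infty$ for every polynomial $g$.

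\textbf{SN $\Rightarrow$ SOS form.} Let $f$ be an SN density of the form (\ref{eq:sn_pdf_old}) with parameters $(\mu,P)$ and degree $d$. By Theorem~\ref{thm:equivalent}, $f$ can be rewritten as (\ref{eq:sn_pdf}) for some PSD matrix $B=B(\mu,P)$. Hence $f\propto e^{-g}$ with $g(x)=Z_x^\top B Z_x$. Since $B\succeq 0$, I would write $B=LL^\top$ (for instance by an eigendecomposition). This gives $g(x)=\|L^\top Z_x\|^2=\sum_i (L^\top Z_x)_i^2$. Each $(L^\top Z_x)_i$ is a polynomial in $x$, so $g$ is SOS; this is also the ``if'' direction of Lemma~\ref{lemma: SOS_characterization}.

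\textbf{SOS form $\Rightarrow$ SN.} Let $f\propto e^{-g}$ with $g=\sum_i h_i^2$ an SOS polynomial.
\begin{enumerate}
\item The top-degree forms of the $h_i^2$ are sums of squares of forms and cannot cancel, so $\deg g=2k$ is even with $k=\max_i\deg h_i$.
\item Choose the feature degree $d=k$. Lemma~\ref{lemma: SOS_characterization} then gives a symmetric PSD $B$ with $g(x)=Z_x^\top B Z_x$. Concretely, write $h_i=c_i^\top Z_x$ and take $B=\sum_i c_ic_i^\top$. This explicit construction also shows that any larger $d\ge k$ works, by padding the coefficient vectors $c_i$ with zeros.
\item Then $f$ has exactly the form (\ref{eq:sn_pdf}), with $c(B)$ as in (\ref{eq: c_B defn}).
\item The converse direction of Theorem~\ref{thm:equivalent} produces $(\mu,P)$ with $f=f(\cdot\mid\mu,P)$, so $f$ is an SN density.
\end{enumerate}

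\textbf{Main obstacle.} The delicate step is the last one. The original formulation (\ref{eq:sn_pdf_old}) asks for $P$ positive definite, while the $B$ obtained from an arbitrary SOS $g$ may be singular. For example, $g$ might involve fewer squares than $m_z$, or the monomial features are linearly dependent on $\mathcal{X}$. I will rely on Theorem~\ref{thm:equivalent} exactly as stated, which asserts the equivalence for PSD $B$. If the theorem's proof turns out to need strict definiteness of the lower-right block of $B$, I would fall back on one of two fixes. The first is to read (\ref{eq:sn_pdf_old}) with $P\succeq 0$, which matches the constraint $P\succeq0$ used in the MLE of Section~\ref{sec: SN parameter estimation}. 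The second is to note that the top-left entry of $B$ only rescales the density and is absorbed into $c(B)$, so it can be adjusted freely. Beyond this point, the only bookkeeping is the degree matching between $g$ and $Z_x$, which step 2 already handles.
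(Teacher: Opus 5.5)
Your proof follows the paper's route: the corollary comes from combining Theorem~\ref{thm:equivalent} with Lemma~\ref{lemma: SOS_characterization}, and your degree matching ($d=k$) and normalizability remarks make explicit what the paper leaves implicit. On your ``main obstacle'': the paper's proof of the converse sets $P_B=2D$ using a Moore--Penrose inverse, so it relies on the reading $P\succeq 0$, and that reading is necessary rather than a fallback (if $P\succ 0$, the top-degree form of $(\tilde Z_x-\mu)^\top P(\tilde Z_x-\mu)$ is positive definite, so e.g.\ $g=x_1^2$ with $m=2$ cannot be represented on any $\mathcal{X}$ with nonempty interior), while your second fix cannot help because the top-left entry $a$ lies outside the block $D$ that determines $P_B$.
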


  Similar to the original SN setup, we use the MLE estimate:
\begin{align*}
    B^* &:= \arg\max_{B \succeq 0} \prod_{i=1}^n f(x_i \mid B) \\
        &= \arg\max_{B \succeq 0} \sum_{i=1}^n \log f(x_i \mid B) \\
        &= \arg\max_{B \succeq 0} \left[ -\sum_{i=1}^n Z_{x_i}^\top B Z_{x_i} - n \log c(B) \right] \\
        &= \arg\min_{B \succeq 0} \left[ \frac{1}{n} \sum_{i=1}^n Z_{x_i}^\top B Z_{x_i} + \log c(B) \right] \\
        &= \arg\min_{B \succeq 0} \left[ \frac{1}{n} \sum_{i=1}^n Z_{x_i}^\top B Z_{x_i} + \log \left( \int_{\mathcal{X}} \exp(-Z_x^\top B Z_x)\, dx \right) \right] \numberthis \label{eqn:convex_opt}
\end{align*}

  We next present the main result of this section:
  \begin{theorem}\label{thm: CVXOPT}
  The MLE search problem (\ref{eqn:convex_opt}) is a convex optimization problem
  \end{theorem}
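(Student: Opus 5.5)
The plan is to check the two ingredients of a convex program separately: the feasible set and the objective.

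\emph{Feasible set.} The set $\{B : B \succeq 0\}$ of symmetric positive semidefinite matrices is a convex cone. If $B_1, B_2 \succeq 0$ and $\lambda\in[0,1]$, then for every vector $v$,
\[
v^\top(\lambda B_1+(1-\lambda)B_2)v=\lambda v^\top B_1 v+(1-\lambda)v^\top B_2 v\ge 0 .
\]

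\emph{Objective.} I will write the objective of (\ref{eqn:convex_opt}) as $L(B)=L_1(B)+L_2(B)$.

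The first term is
\[
L_1(B)=\frac1n\sum_{i=1}^n Z_{x_i}^\top B Z_{x_i}=\tr\!\Big(B\,\tfrac1n\textstyle\sum_i Z_{x_i}Z_{x_i}^\top\Big).
\]
This is linear in $B$, hence convex.

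The second term is $L_2(B)=\log c(B)$, with $c(B)$ as in (\ref{eq: c_B defn}). The key observation is that the exponent $-Z_x^\top B Z_x=-\tr(B\,Z_xZ_x^\top)$ is linear in $B$ for each fixed $x$. So $L_2$ is a log-integral-exp of functions that are affine in $B$, i.e.\ the log-partition function of an exponential family with sufficient statistic $Z_xZ_x^\top$. I will prove convexity directly with Hölder's inequality. Take $B_1,B_2\succeq0$, $\lambda\in(0,1)$, and set $B_\lambda=\lambda B_1+(1-\lambda)B_2$. Then
\[
c(B_\lambda)=\int_{\mathcal X} \big(e^{-Z_x^\top B_1Z_x}\big)^{\lambda}\big(e^{-Z_x^\top B_2Z_x}\big)^{1-\lambda}dx\le c(B_1)^{\lambda}c(B_2)^{1-\lambda},
\]
using Hölder's inequality with exponents $p=1/\lambda$ and $q=1/(1-\lambda)$. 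Taking logarithms gives $\log c(B_\lambda)\le\lambda\log c(B_1)+(1-\lambda)\log c(B_2)$, which is convexity of $L_2$. A sum of convex functions is convex, so $L$ is convex on a convex domain, and problem (\ref{eqn:convex_opt}) is a convex optimization problem.

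\emph{Well-definedness.} The main point requiring care is that $L_2$ is finite on the feasible set, so the objective is a proper convex function. Since $\mathcal X$ is compact and $x\mapsto Z_x^\top BZ_x$ is continuous, the integrand is bounded above by $1$ (because $B\succeq0$) and bounded below by a positive constant. Hence $0<c(B)\le \mathrm{vol}(\mathcal X)<\infty$, assuming $\mathcal X$ has positive Lebesgue measure, which is implicit for a density on $\mathcal X$.

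\emph{Alternative route.} One can instead differentiate under the integral sign; dominated convergence is justified by compactness. This shows that the Hessian of $L_2$ in the direction $H$ equals $\mathrm{Var}_{f(\cdot|B)}\big(Z_X^\top H Z_X\big)\ge0$. The Hölder argument avoids that regularity bookkeeping, so I will present it as the main proof.
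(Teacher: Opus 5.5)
Your proposal is correct and follows essentially the same route as the paper: the empirical term is written as a trace against $\frac1n\sum_i Z_{x_i}Z_{x_i}^\top$ (hence affine), and convexity of $\log c(B)$ comes from H\"older's inequality with exponents $1/\lambda$ and $1/(1-\lambda)$. Your version is slightly more careful. You work on the closed cone $B\succeq 0$ that actually appears in (\ref{eqn:convex_opt}), whereas the paper's proof silently switches to positive definite $B$, and you verify that $0<c(B)<\infty$ on the feasible set, which the paper does not check.
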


We would like to emphasize that having a convex formulation does not directly provide an easy workaround to achieve the global optimum, particularly due to the constraint \( B \succeq 0 \). Although the problem is convex, we still rely on numerical optimization techniques. However, we can leverage well-established algorithms to tackle this issue effectively. Specifically, we utilize semidefinite programming (SDP) relaxations to approximate the global optimum.

Additionally, we can derive a feature-space MLE, \( \hat{B} \) obtained from the MLE of the previous formulation as:
\[
\hat{B} = B(\hat{\mu}, \hat{P}),
\]
as a direct consequence of Theorem~\ref{thm:equivalent}. This \( \hat{B} \) will serve as the starting point for the numerical search, facilitating the convergence to the global optimum.

\paragraph{Universal Approximation Motivation.}
We end this discussion by noting that the SN construction is motivated by the expressive power of polynomial-based log-densities on
compact domains. The key motivation behind proposing the SN distribution stems from
the idea that the multivariate distribution of the data being studied possesses a density function,
and we seek to model the log-density using high-dimensional polynomials. A useful justification for
this point of view is provided by the Stone--Weierstrass theorem.  

\begin{theorem}[Stone--Weierstrass Theorem]
Let \( X \) be any {compact space}, and let \( C(X) \) denote the set of all continuous functions \( f : X \to \mathbb{R} \), equipped with pointwise addition and multiplication. Let \( \mathcal{A} \) be a subalgebra of \( C(X) \). If \( \mathcal{A} \) contains the constant functions and separates points of \( X \) (i.e., for any two distinct points \( x, y \in X \), there exists \( f \in \mathcal{A} \) such that \( f(x) \ne f(y) \)), then \( \mathcal{A} \) is dense in \( C(X) \) under the uniform norm.
\end{theorem}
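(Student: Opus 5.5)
The statement immediately above is the classical Stone--Weierstrass theorem. I would prove it by the standard route: lattice closure first, then a two-point interpolation and compactness argument. Let $\bar{\mathcal{A}}$ denote the uniform closure of $\mathcal{A}$ in $C(X)$. It is again a subalgebra containing the constants, and it separates points. It therefore suffices to show that $\bar{\mathcal{A}} = C(X)$, i.e.\ that every $f\in C(X)$ is a uniform limit of elements of $\bar{\mathcal{A}}$.

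\textbf{Step 1 (absolute values).} I will show that $g\in\bar{\mathcal{A}}$ implies $|g|\in\bar{\mathcal{A}}$. Since $X$ is compact, $g$ is bounded, say $\|g\|_\infty\le M$. The function $t\mapsto |t|$ is continuous on $[-M,M]$, so it can be approximated uniformly by polynomials $p_k(t)$. This is the one-dimensional Weierstrass theorem; to avoid circularity I would obtain it directly, either from the Taylor series of $\sqrt{1-s}$ (which converges uniformly on $[0,1]$, applied to $s=1-t^2/M^2$) or from the monotone iteration $q_{k+1}(t)=q_k(t)+\tfrac12\bigl(t^2-q_k(t)^2\bigr)$ on $[0,1]$, which converges to $|t|$ by Dini's theorem. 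Because $\bar{\mathcal{A}}$ is an algebra containing the constants, $p_k\circ g\in\bar{\mathcal{A}}$. Since $p_k\circ g\to|g|$ uniformly and $\bar{\mathcal{A}}$ is closed, $|g|\in\bar{\mathcal{A}}$. It follows that $\max(g,h)=\tfrac12(g+h+|g-h|)$ and $\min(g,h)$ also lie in $\bar{\mathcal{A}}$.

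\textbf{Step 2 (two-point interpolation).} Let $x\neq y$ and $a,b\in\mathbb{R}$. Choose $h\in\mathcal{A}$ with $h(x)\neq h(y)$, and set
\[
\ell(z) = a + (b-a)\,\frac{h(z)-h(x)}{h(y)-h(x)}.
\]
Then $\ell\in\bar{\mathcal{A}}$, with $\ell(x)=a$ and $\ell(y)=b$. When $x=y$, the constant function $a$ does the job.

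\textbf{Step 3 (compactness twice).} Fix $f\in C(X)$ and $\varepsilon>0$.
\begin{itemize}
\item First fix $x$. For each $y$, Step 2 gives $\ell_{x,y}\in\bar{\mathcal{A}}$ that agrees with $f$ at $x$ and at $y$. The open sets $\{z:\ell_{x,y}(z)<f(z)+\varepsilon\}$ cover $X$, so finitely many suffice. Taking the minimum of the corresponding $\ell_{x,y}$ gives $g_x\in\bar{\mathcal{A}}$ with $g_x<f+\varepsilon$ everywhere and $g_x(x)=f(x)$.
\item The open sets $\{z:g_x(z)>f(z)-\varepsilon\}$ then cover $X$. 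Choosing a finite subcover and taking the maximum of the corresponding $g_x$ gives $g\in\bar{\mathcal{A}}$ with $\|g-f\|_\infty<\varepsilon$.
\end{itemize}
Since $\bar{\mathcal{A}}$ is closed and $\varepsilon$ was arbitrary, $f\in\bar{\mathcal{A}}$.

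The main obstacle is Step 1. Everything else is formal, and Step 1 is the only place where genuine analysis enters: we need the one-variable polynomial approximation of $|t|$ without presupposing the theorem being proved. I would use the $\sqrt{1-s}$ binomial series. Its coefficients are all negative after the first, and the series converges at $s=1$, so Abel's theorem or a direct monotone-tail bound gives uniform convergence on $[0,1]$.

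For the paper's application, I would note that taking $X=\mathcal{X}$ compact and $\mathcal{A}=\mathbb{R}[x]$ satisfies the hypotheses, since the coordinate functions already separate points. Hence polynomials are dense in $C(\mathcal{X})$, which is the density statement the SN framework relies on.
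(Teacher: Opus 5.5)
Your proof is correct. It is the standard lattice-plus-compactness argument: the closure is a closed subalgebra; polynomial approximation of $|t|$ gives closure under $\max$ and $\min$; the separating function gives two-point interpolation; and two finite-subcover passes finish the job. The argument uses only compactness and continuity, not Hausdorffness beyond what point separation already forces. The paper itself gives no proof of this theorem. It quotes Stone--Weierstrass as a classical result, only to motivate polynomial log-densities, so there is no competing route to compare against.

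Two small remarks.

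\textbf{The fallback iteration.} The domain should be $[-1,1]$, applied to $g/\|g\|_\infty$, with starting value $q_0=0$. On $[0,1]$ the target $|t|=t$ is trivial. Your main route via the binomial series of $\sqrt{1-s}$ is fine as written.

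\textbf{The closing application.} Your last sentence, like the paper, slightly overstates what the theorem gives for SN models. SN log-densities are negatives of SOS polynomials, which form a cone rather than an algebra. So density of $\mathbb{R}[x]$ in $C(\mathcal{X})$ does not by itself give density of the SN class. One extra step closes this. For a continuous density $f>0$ on $\mathcal{X}$, choose $C$ with $C-\log f>0$ and approximate $\sqrt{C-\log f}$ uniformly by a polynomial $q$. Then $q^2$ is SOS and uniformly close to $C-\log f$. In general the degree of $q$ is unbounded, so no fixed degree $d$ suffices.
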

This is a generalization of the {Weierstrass approximation theorem}. In our context, the space \( C(X) \) contains the log-densities of random variables defined over a compact domain \( X \subset \mathbb{R}^p \). We approximate such log-densities using elements of a subalgebra \( \mathcal{A} \), which we take to be the set of multivariate polynomials. The class of polynomial is dense in \( C(X) \), justifying our approach. 

\subsection{Parameter Estimation for the Convex Reformulation}

For our problem, we define the objective function from \ref{eqn:convex_opt}:  
\begin{equation}
\label{eqn:objective}
J(B) := \frac{1}{n} \sum_{i=1}^n Z_{x_i}^\top B Z_{x_i}
+ \log \left( \int_{\mathcal{X}} \exp(-z_x^\top B z_x)\, dx \right).
\end{equation}
where \(Z_{x_i} \in \mathbb{R}^{m_z}\) are input feature vectors, \(B \in \mathbb{R}^{m_z \times m_z}\) is a PSD matrix, and \(c(B)\) represents the constant of proportionality. Our goal is to solve the following optimization problem:  

\begin{align*}
    \min_B &\quad J(B)\\
    s.t &\quad  B\succeq 0
\end{align*}

To address this constrained optimization problem, we employ the Frank-Wolfe algorithm \citep{Frank_wolfe}, also known as the conditional gradient method. This algorithm is particularly well-suited for convex optimization problems defined over compact convex domains. The Frank-Wolfe algorithm iteratively solves a linear approximation of the objective function to identify a feasible direction, followed by step-size optimization along that direction. Its computational efficiency and simplicity make it a popular choice for tackling large-scale problems with semidefinite constraints.

However, the constraint set \(B \succeq 0\) is not compact. To overcome this limitation, we artificially impose an additional bound on \(\|B\|\), resulting in the following feasible region:  
\[
\S := \{B \in \mathbb{R}^{m_z \times m_z} \mid B \succeq 0, \; \|B\| \leq M\},
\]  
where \(M > 0\) is a large constant. This bounding ensures that the feasible set \(S\) is compact, a necessary condition for guaranteeing convergence of the Frank-Wolfe algorithm.

This approach is conceptually similar to placing a lower bound on \(c(B)\), which is a common practice in maximum likelihood estimation (MLE) for the SN distribution. Recall that \(c(B)\) appears in the denominator of the SN probability density function. Extremely small values of \(c(B)\) can cause the density to blow up, leading to numerical instability during MLE search. To address this issue, it is customary to restrict \(c(B)\) to values above a certain threshold. By directly bounding \(B\), we achieve a similar effect: ensuring numerical stability while simultaneously making the feasible region compact, enabling the application of the Frank-Wolfe algorithm effectively. The Frank-Wolfe algorithm is summarized as follows:

\begin{enumerate}
    \item \textbf{Initialization}: Start with an initial feasible point \(B^{(0)}\), where \(B^{(0)}\in \S\).
    \item \textbf{Iterative Updates}: For \(t = 0, 1, 2, \dots\):
    \begin{enumerate}
        \item \textbf{Linearization}: Compute the gradient \(\nabla J(B^{(t)})\) and solve the linearized subproblem:  
        \begin{equation}\label{eqn: SDP}
            \tilde B^{(t)} := \arg \min_{B\in \S} \ \langle \nabla J(B^{(t)}), B \rangle.
        \end{equation}
       Procedure to solve the above problem is discussed in section \ref{sec: SDP_problem}

        \item \textbf{Update Direction}: Set the direction \(D^{(t)} = \tilde B^{(t)} - B^{(t)}\).
        \item \textbf{Step Size}: Choose a step size \(\gamma_t \in [0, 1]\). Generally taken as $\frac{2}{2+t}$.
        
        \item \textbf{Update}: Update \(B^{(t)}\) as:  
        \[
        B^{(t+1)} := B^{(t)} + \gamma_t D^{(t)}.
        \]
    \end{enumerate}
    \item \textbf{Stopping Criterion}: Use a suitable criterion to break the above loop. Some choices are discussed in \ref{sec: stopping_criteria}
\end{enumerate}

The Frank-Wolfe algorithm ensures that \(B^{(t)} \succeq 0\) at every iteration, maintaining feasibility with respect to the semidefinite constraint. Its key advantage lies in leveraging the linearized subproblem, which is computationally simpler than directly minimizing the original nonlinear objective.

\subsubsection{The SDP Problem} \label{sec: SDP_problem}

The above equation (\ref{eqn: SDP}) can be interpreted as to minimize the linear approximation of the problem given by the first-order Taylor approximation of $J(B)$ around $B^{(t)}$, constrained to stay within $\S$.  It can be shown that

\[
\nabla J\!\left(B^{(t)}\right)
=
\frac{1}{n} \sum_{i=1}^n Z_{x_i} Z_{x_i}^\top
\;-\;
\frac{
    \displaystyle\int_{\mathcal{X}} Z_x Z_x^\top \exp\left(-Z_x^\top B^{(t)} Z_x\right)\, dx
}{
    \displaystyle\int_{\mathcal{X}} \exp\left(-Z_x^\top B^{(t)} Z_x\right)\, dx
}.
\]

To approximate the latter integral, we fix some $K$ points in $\X$, namely $x_1,...,x_K$, and we use:
\[
\nabla \hat{J}\!\left(B^{(t)}\right)
=
\frac{1}{n} \sum_{i=1}^n Z_{x_i} Z_{x_i}^\top
\;-\;
\frac{
    \displaystyle\sum_{j=1}^K Z_{x_j} Z_{x_j}^\top \exp\left(-Z_{x_j}^\top B^{(t)} Z_{x_j}\right)
}{
    \displaystyle\sum_{j=1}^K \exp\left(-Z_{x_j}^\top B^{(t)} Z_{x_j}\right)
}.
\]

Thus, in each iteration, we are required to minimize $\langle \nabla \hat J(B^{(t)}),B\rangle$ (the dot product between the two quantities), subject to $B\in \S$ . This is a standard SDP problem (linear objective with PSD constraint) and can be solved using the CVXOPT package in python.

\subsubsection{Stopping Criteria}\label{sec: stopping_criteria}

The Frank-Wolfe algorithm can be terminated using various stopping criteria, depending on computational constraints and desired accuracy. Below, we outline commonly used conditions for stopping the iterations.

\begin{itemize}
    \item \textbf{Gradient Norm Threshold:} The algorithm halts when the norm of the gradient falls below a predefined threshold, indicating approximate stationarity:
    \begin{equation*}
        \|\nabla J(B^{(t)})\| \leq \epsilon.
    \end{equation*}
    \item \textbf{Objective Function Improvement:} The algorithm can terminate when the relative or absolute change in the objective function is small:
    \begin{equation*}
        |J(B^{(t)}) - J(B^{(t-1)})| \leq \delta.
    \end{equation*}
    This criterion ensures that iterations stop when further improvements become negligible.
    
    \item \textbf{Primal-Dual Gap:} We obtain a lower bound for the optimal objective function at each iteration. Since $J(B)$ is a convex function, for the optimal solution $B^*$ we have:
    \begin{align}
        J(B^*)&\geq J(B) + \langle\nabla J(B), B^*-B\rangle\\
        &\geq J(B) - \langle\nabla J(B), B\rangle + \min_{\tilde B\in \S}\langle\nabla J(B), \tilde B\rangle
    \end{align}
    Note that at every iteration, the later optimization problem is already solved in eqn \ref{eqn: SDP}. Let $l_t$ denote the lower bound at iteration $t$, with $l_0 = -\infty$. Thus, we update the lower bound as:
    $$l_t = \max\ \left(l_{t-1}\ ,\ J(B^{(t)}) + \langle\nabla J(B^{(t)}), \tilde B^{(t)}-B^{(t)}\rangle\right)$$
    where $\tilde B^{(t)}$ is obtained from eqn \ref{eqn: SDP}. At every iteration, we have that $$l_t\leq J(B^*) \leq J(B^{(t)})$$ We terminate when the duality gap $\left|J(B^{(t)}) - l_t\right| $ falls below a predefined threshold.

\end{itemize}

 While the primal-dual gap is often preferred in convex optimization, gradient-based criteria may be computationally inexpensive. In practice, a combination of these criteria is typically employed to balance computational efficiency and solution accuracy. A predefined maximum number of iterations is also enforced to ensure termination within a reasonable computational budget: $t \geq T_{\max}$.

\section{Approximate Procedures}\label{sec: approximate_procedures}
SN distributions provide a flexible framework for modeling multivariate data, capturing both multimodality and dependence structures simultaneously. However, estimating the parameters of SNs remains computationally demanding. To mitigate this challenge, several fast but suboptimal algorithms have been proposed, such as the feature-space maximum likelihood estimator (FMLE) and sliced exponential approximations. Importantly, these approaches can also be adapted to the convex reformulation developed in this paper. Beyond these existing methods, we introduce two new modifications designed to further accelerate the estimation of SNs.

\subsection{Existing Approaches}\label{sec: Existing_Approaches}

\begin{itemize}
    \item \textbf{Feature Space MLE (FMLE)}: Recall the feature-space maximum likelihood estimator (FMLE) introduced in Section~\ref{sec: SN parameter estimation}; see equations~\ref{FMLE_mu} and~\ref{FMLE_P}. We use the FMLE as an initialization for the numerical search for the true maximum likelihood estimator. Although the FMLE is not the exact maximizer of the likelihood, it is computationally convenient and straightforward to evaluate. It has been empirically shown in \cite{Crespo_SN_First} to often achieve a good likelihood in practice. Thus, when the dimension of the problem is very large, making the true likelihood search computationally intractable, the FMLE is often used as an initial condition.\par
The same idea may be implemented in the convex reformulation as well. According to  Theorem \ref{thm:equivalent}, we may obtain the FMLE: $\hat B$ for the convex reformulation (\ref{eq:sn_pdf}) corresponding to $(\hat \mu,\hat P)$. In particular, we set $\hat B$ as follows:
  \[\hat B
=\frac 12\begin{bmatrix}
\hat\mu^\top \hat P\hat \mu & -\hat\mu^\top \hat P \\
- \hat P \hat \mu & \hat P
\end{bmatrix}
\]

which will give a satisfactory ballpark estimate for the optimal likelihood in  \ref{eq:sn_pdf}.
    \item \textbf{Sliced Exponential}: \cite{sliced_exponential} introduces the concept of sliced exponential distributions where $f(x) \propto e^{-\lambda^\top Z_x}$. These are more flexible, as they are not restricted to S.O.S polynomials, however may lead to spurious distributions. Thus, a carefull restriction of the search space is proposed, called the primal SNs \cite{Hammond2024-cn}. They obtain a SOS basis using the polynomials that occur from the FMLE solution, say $Z_{sos}$, and consider   $f(x) \propto e^{-\lambda^\top Z_{sos}}$. This approach remains identical in the convex reformulation as well.

    \item \textbf{Dimension Reduction}: In the next section,  we consider the independence grouping approach introduced in \cite{dim_red}
. Let $\{x_1,...,x_m\}$ be the complete set of variables we are interested in modeling. Let $G_i$ denote disjoint subgroups of indices: $G_i = \{{g_{i1}},...,{g_{im_i}}\} $, with $\cup_i G_i = \{1,...,m\}$. Let $x_{G_i}$ denote the set of variables: $x_{G_i}:=\{x_j: j\in G_i\}$. Thus, any collection of disjoint $G_i's$ provide a partition for the set of variables in $\{x_1,...,x_m\}$. Suppose it was possible to obtain such a partition of size $I$, for which each of the $x_{G_i}$ are independent among themselves. In that case, the joint PDF of $x$ can be written as the product of the PDFs of the subgroups:
\begin{equation}\label{sec: dimension reduction}
    f(x_1,...,x_m) = \prod^I_{i=1} f_{G_i}(x_{G_i})
\end{equation}

As a consequence, the $P$ matrix for the corresponding SN distribution can be obtained as follows:
  \[P
=\begin{bmatrix}
P_1 & 0 & ... & 0 & 0 \\
0 & P_2 & ... & 0 & 0 \\
... &...&...&...&... \\
0 & 0 & ... & P_I & 0 \\
0 & 0 & ... & 0 & 0
\end{bmatrix}
\]
 where each of the subgroups $x_{G_i}$ follows a SN distribution with parameters $(\mu_i,P_i)$. The last rows (and columns) of zeros corresponds to the cross terms between these subgroups. 
 
 However, these subgroups are not readily available and require the identification of independent random variables. \cite{dim_red} provide a framework to address this problem by introducing a notion of dependency, denoted as \( \rho(x_i, x_j; \mathcal{D}) \), which quantifies the dependence between two variables \( x_i \) and \( x_j \) given the dataset \( \mathcal{D} \). Their approach involves comparing the empirical bivariate copula against the theoretical copula under the assumption of complete independence. Two variables are deemed independent if \( \rho(x_i, x_j; \mathcal{D}) < \epsilon \) for some predefined threshold \( \epsilon \).

\end{itemize}

\subsection{Leveraging Weak Dependence }\label{sec:leveraging-low-dependence}

In this section we expand upon the dimension reduction approach of \cite{dim_red}. 
For a given dataset, completely independent subgroups of variables may not exist, 
since exact independence is largely a theoretical ideal. We therefore introduce an 
empirical modification of the original approach for settings where independence does 
not strictly hold but the dependence among subgroups is sufficiently weak. 

The degree of dependence between variables (or groups of variables) can be quantified 
in several formal ways. In this work, we adopt the \emph{distance correlation} 
$\rho_D(\cdot,\cdot)$, which captures both linear and 
nonlinear dependence. It is a widely used measure of dependence among random variables introduced by \cite{dist_corr}. Unlike Pearson correlation, distance correlation has the crucial property that \(\rho_D = 0\) if and only if the variables are truly independent. Moreover, it provides a statistical framework for formally testing independence, making it a robust choice for dependency assessment. Given these advantages, we adopt distance correlation (\(\rho_D\)) as our measure of independence and apply the same hierarchical clustering methodology as in \cite{dim_red} to estimate the subgroup structure.

Let $X=(X_1,\dots,X_p)^\top \in \mathbb{R}^p$ denote the feature vector. The proposed 
procedure consists of the following steps.

\begin{enumerate}

\item \textbf{Step 1 (Grouping variables via weak dependence).}  
We partition the variables into groups such that variables within a group exhibit 
pairwise dependence, however, is weakly dependent or independent of variables outside the group. The grouping is constructed using a greedy sequential procedure. Let $\epsilon>0$ be a pre-specified threshold. Initialize with the first variable:
\[
G_1 = \{1\}, \qquad \mathcal{G} = \{G_1\},
\]
where $\mathcal{G}$ denotes the collection of groups. Then, for $j=2,\dots,p$, process 
$X_j$ as follows:

\begin{itemize}
    \item Compute $\rho_D(X_i, X_j)$ for all $i<j$.
    \item If there exists a group $G \in \mathcal{G}$ and some $i \in G$ such that
    \[
    \rho_D(X_i, X_j) > \epsilon,
    \]
    assign $j$ to that group, i.e., update $G \leftarrow G \cup \{j\}$.
    \item Otherwise, create a new group $G_{\text{new}} = \{j\}$ and update
    $\mathcal{G} \leftarrow \mathcal{G} \cup \{G_{\text{new}}\}$.
\end{itemize}

At the end of this step, we obtain a partition
\[
\{1,\dots,p\} = G_1 \cup \cdots \cup G_K, \qquad G_g \cap G_{g'} = \varnothing \ (g\neq g'),
\]
where each group $G_g$ contains variables that are approximately independent of 
variables in other groups in the sense of distance correlation.

\item \textbf{Step 2 (Within-group modeling).}  
For each group $G_g$, let $X_{G_g}$ denote the subvector of $X$ restricted to the 
indices in $G_g$. We model the distribution of $X_{G_g}$ using a SN distribution with parameters $(\mu_g, P_g)$, where $\mu_g \in \mathbb{R}^{|G_g|}$ is 
the mean parameter and $P_g \in \mathbb{R}^{|G_g|\times|G_g|}$ is the precision matrix 
. These parameters are estimated independently for each group.

This step yields a block-diagonal precision structure
\[
P_{\text{block}} = \operatorname{diag}(P_1,\dots,P_K),
\]
which corresponds to an approximation assuming group-wise independence.

\item \textbf{Step 3 (Introducing cross-group dependence).}  
To account for residual dependence across groups, we relax the block-diagonal structure 
by introducing off-diagonal precision blocks. For simplicity, consider two groups 
($K=2$) with precision blocks $P_{11}=P_1$ and $P_{22}=P_2$. We complete the full 
precision matrix as
\[
P = 
\begin{pmatrix}
P_{11} & P_{12} \\
P_{12}^\top & P_{22}
\end{pmatrix},
\]
where $P_{12}$ captures cross-group interactions. During this step, $P_{11}$ and $P_{22}$ 
are kept fixed, and we optimize over $P_{12}$ subject to the constraint
\[
P \succeq 0.
\]
For $K>2$, the same idea applies by introducing off-diagonal blocks between all pairs 
of groups. This stage allows the model to capture weak inter-group dependence while 
preserving the stability gained from the initial dimension reduction.

Section 5.1 of \cite{dim_red} provides a sequential approach in estimating the full $P$, by fixing the $P_{11}$ and $P_{22}$. Using Schur complement properties, $P\succ 0$ if and only if:
\begin{equation}\label{eq schur}
    P_{11} - P_{12} P_{
22}^{-1}P_{21} \succ 0
\end{equation}

We reparameterize $P_{12}$ :
\[
P_{12}(U) = P_{1}^{1/2} \, C(U) \, P_{2}^{1/2}, \quad \text{where} \quad C(U) := \frac{U}{1 + \|U\|_F},
\]
for an unconstrained matrix variable $U \in \mathbb{R}^{d_1 \times d_2}$. By this, we obtain:
\[
P_{1} - P_{12} P_{2}^{-1} P_{12}^\top 
= P_{1}^{1/2} \left( I - C(U) C(U)^\top \right) P_{1}^{1/2}.
\]
Thus, the Schur complement is positive definite if and only if $\|C(U)\|_2 < 1$, which is guaranteed by the definition of $C(U)$.
This maps any $U \in \mathbb{R}^{d_1 \times d_2}$ to a matrix with $\|C(U)\|_2 \leq \|C(U)\|_F < 1$, ensuring $P \succ 0$ without explicit constraints.

\paragraph{What is optimized:}
Letting $Z_1, Z_2$ denote the feature matrices for the two blocks, and $Z_1^{(j)}, Z_2^{(j)}$ the integration grid, we define the full precision matrix:
\[
P(U) = 
\begin{bmatrix}
P_1 & P_{12}(U) \\
P_{12}(U)^\top & P_2
\end{bmatrix}, 
\]
Since $\mu_i$ are kept fixed, define centered blocks $y_1:=z_1-\mu_1$ and $y_2:=z_2-\mu_2$. Then
\begin{align}
\begin{pmatrix}y_1\\y_2\end{pmatrix}^{\!\top}
P
\begin{pmatrix}y_1\\y_2\end{pmatrix}
&=
\begin{pmatrix}y_1\\y_2\end{pmatrix}^{\!\top}
\begin{pmatrix}
P_{11} & P_{12}\\
P_{12}^\top & P_{22}
\end{pmatrix}
\begin{pmatrix}y_1\\y_2\end{pmatrix} \notag\\
&= y_1^\top P_{11}y_1 \;+\; y_2^\top P_{22}y_2 \;+\; 2\,y_1^\top P_{12}y_2.
\label{eq:quad_block_expansion}
\end{align}
For data points $x^{(i)}$, $i = 1,...,n$, and corresponding $y^{(i)}$, the SN likelihood function can be written as:
\begin{align*}
\ell(P) &= \prod_{i=1}^n\frac{1}{c(P)}\exp{\left\{\frac 12\left(-{y^{(i)}_1}^\top P_{11}y^{(i)}_1 \;-\; {y^{(i)}_2}^\top P_{22}y_2 \;-\; 2\,{y^{(i)}_1}^\top P_{12}y^{(i)}_2\right)\right\}}\\
&= K\cdot \frac{\prod_{i=1}^n \exp{\left(-\,{y^{(i)}_1}^\top P_{12}y^{(i)}_2\right)}}{c(P)^n} 
\end{align*}

where $K$ is a constant depending on $P_{11}, P_{22}$ and $n$. Now, maximizing the likelihood is equivalent to minimizing the following:
\begin{equation}\label{eqn: P_12_obj}
    -\frac{1}n\log(\ell(P)) = -\frac1n\log(K)+\frac{1}{n} \sum_{i=1}^n y_1^{(i)\top} P_{12} y_2^{(i)}+\log(c(P))
\end{equation}

Now,

\begin{align*}
    c(P) &= \int_{\X} \exp{\left(-\frac 12{y_1}^\top P_{11}y_1 \;-\; \frac 12{y_2}^\top P_{22}y_2 \;-\; \,{y_1}^\top P_{12}y_2\right)}\\
    &= \int_{\X} \exp{\left(-\frac 12{y_1}^\top P_{11}y_1 \;-\; \frac12{y_2}^\top P_{22}y_2 \;\right)}\cdot  \exp{\left(-\,{y_1}^\top P_{12}y_2\;\right)}
\end{align*}

Now, we approximate the above integral using a summation. Suppose $x^{(j)}$ are points to form a uniform grid on $X$, $j = 1,...,m$. Correspondingly, we also have $y^{(j)}:=x^{(j)}-\mu$. Define constants $w^{(j)}$ depending on $P_{11}$ and $P_{22}$:
$$w^{(j)}:= \exp{\left(-\frac 12{y^{(j)}_1}^\top P_{11}y^{(j)}_1 \;-\; \frac 12{y^{(j)}_2}^\top P_{22}y^{(j)}_2 \;\right)}$$
Thus, we have:
\begin{equation}
\label{eq:cP_grid_approx_weights}
c(P)\;\approx\;\frac{\mathrm{Vol}({X})}{m} \sum_{j=1}^m w^{(j)} \exp\!\Big(-{y_1^{(j)}}^\top P_{12}y_2^{(j)}\Big).
\end{equation}

 The only variable here is the matrix $P_{12}$, which we parameterized with the matrix $U$. Thus, substituting the above approximation (\ref{eq:cP_grid_approx_weights}) into the objective (\ref{eqn: P_12_obj}), and ignoring the constants, we have:
 \begin{equation}
     \min_{U \in \mathbb{R}^{d_1 \times d_2}} \;
\frac{1}{n} \sum_{i=1}^n y_1^{(i)\top} P_{12}(U) y_2^{(i)} +
\log \left( \frac{1}{m} \sum_{j=1}^m w^{(j)}\exp\left(-y_1^{(j)\top} P_{12}(U) y_2^{(j)} \right) \right)
 \end{equation}

This objective is smooth and unconstrained over $U$, and the optimal $P_{12}$ is recovered via the transformation $P_{12}(U) = P_{1}^{1/2} \, C(U) \, P_{2}^{1/2}$. The full joint precision matrix $P$ is assembled from the known $P_1$, $P_2$, and the optimized $P_{12}$.
\end{enumerate}

\subsubsection{Limitation: Marginals will be compromised}
\label{subsec:marginal-drift}

In our three–stage weak–dependence procedure, we first partition the features into groups. Then, we estimate separate SN models on variable groups $G_1$ and $G_2$, obtaining $P_{11}$ and $P_{22}$ that (regularized-)maximize the marginal likelihoods within their respective SN classes. These subgroup fits are therefore already marginal–optimal given the chosen feature maps. In the third stage, we introduce cross–group dependence by completing the precision with an off–diagonal block $P_{12}$ while keeping $P_{11}$ and $P_{22}$ fixed and enforcing $P\succ0$. This completion unavoidably alters the induced marginals. Writing the joint SN density (up to normalization) as
\[
p(x_1,x_2)\ \propto\ 
\exp\!\Big(-z_1(x_1)^\top P_{11} z_1(x_1) - z_2(x_2)^\top P_{22} z_2(x_2) - 2\, z_1(x_1)^\top P_{12} z_2(x_2)\Big),
\]
the new marginal on $x_1$ becomes
\begin{align*}
    &\  p_1^{\text{new}}(x_1)\ \\
    \propto\ &\ \exp\!\big(-z_1(x_1)^\top P_{11} z_1(x_1)\big)\,
\underbrace{\int \exp\!\big(- z_2(x_2)^\top P_{22} z_2(x_2) - 2\, z_1(x_1)^\top P_{12} z_2(x_2)\big)\,dx_2}_{g(x_1;P_{12})}
\end{align*}

Unless $P_{12}=0$, the factor $g(x_1;P_{12})$, which depends on $x_1$ through the bilinear coupling $z_1^\top P_{12} z_2$, so $p_1^{\text{new}}$, is no longer proportional to the stage–1 marginal $\exp(-z_1^\top P_{11} z_1)$. An identical conclusion holds for the marginal on $x_2$. Consequently, because $P_{11}$ and $P_{22}$ were already chosen to optimize their subgroup objectives, introducing any nonzero $P_{12}$ cannot improve those marginals and typically makes them suboptimal relative to their stage–2 optima. The matrix–completion step should therefore be viewed as an explicit trade–off: we gain joint dependence fit via $P_{12}$ at the expense of a degradation in the marginal fit.

\section{Application}\label{sec: numerical_example}

We apply the proposed SN modeling pipeline to the NASA Langley flight dataset, which was collected in experiments motivated by in-flight loss of control (LOC). LOC constitutes the largest fatal accident category for commercial jet airplanes worldwide \citep{Belcastro11}. In general terms, LOC describes aircraft motion that departs from the normal operating flight envelope and is not predictably corrected by pilot inputs. Such behavior is often associated with strong nonlinear effects and dynamic coupling \citep{Belcastro11,Crespo12}. In these regimes, small changes in the vehicle state may lead to disproportionately large responses, including oscillatory or divergent motion. The resulting uncommanded angular rates can make it difficult to maintain heading, altitude, and wings-level flight.

The data arise from experiments conducted with the Generic Transport Model (GTM), a \(5.5\%\) dynamically scaled, remotely piloted, twin-turbine aircraft. The flights used in our analysis correspond to critical upset conditions. While these runs are nominally identical, their recorded responses show substantial variability, consistent with the complex and nonlinear dynamics present in LOC-related flight behavior. This variability makes the dataset a useful testbed for the SN framework, which is designed to represent nonlinear dependence structures in a compact and tractable form.

We examine two numerical settings. In the first one, we fit an SN model to a three dimensional slice of the dataset, allowing us to evaluate the model's ability to capture localized nonlinear dependence in a low-dimensional setting. In the second one, we consider a five-dimensional subset of the data and use the weak-dependence assembly strategy from Section~\ref{sec:leveraging-low-dependence} with a \(2{+}3\) block partition. This setting tests whether separately fitted low-dimensional SN models can be assembled into a higher-dimensional representation through cross-block completion.


\paragraph{Three-dimensional dataset.}
We first consider the three-dimensional slice using \(n=1000\) observations and
polynomial degree \(d=2\). Figure~\ref{fig:nasa_3d_compare} compares the original
sample with MCMC draws from two fitted SN models: the feature-space maximum
likelihood estimator (FMLE) and the numerically optimized SN estimator. Both fitted
models reproduce the main curved and clustered structures present in the observed
data, demonstrating the ability of the polynomial feature representation to capture
nonlinear dependence. The FMLE provides a computationally inexpensive initial fit,
whereas the subsequent likelihood optimization further refines the estimated density.

Evaluated on the observed three-dimensional dataset, the FMLE and optimized SN
models attain log-likelihoods of \(-3.7
    \ \text{and}\ 
    \num{2.4},
\)
respectively, indicating that the numerical optimization improves
the in-sample fit relative to the FMLE initialization.

\begin{figure}[H]
    \centering

    \begin{subfigure}[t]{0.32\textwidth}
        \centering
        \includegraphics[width=\linewidth]{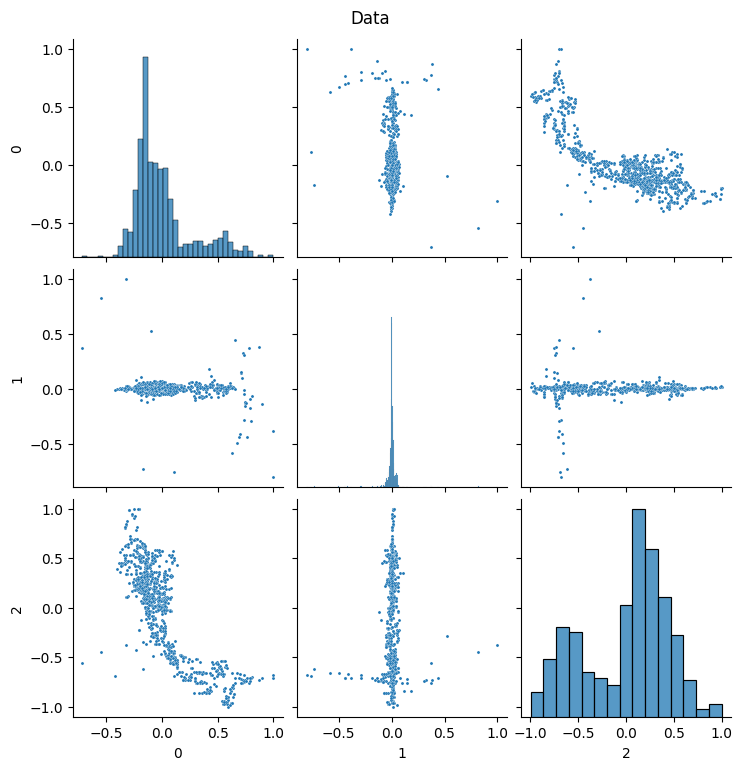}
        \caption{Original sample.}
        \label{fig:nasa_012_org}
    \end{subfigure}
    \hfill
    \begin{subfigure}[t]{0.32\textwidth}
        \centering
        \includegraphics[width=\linewidth]{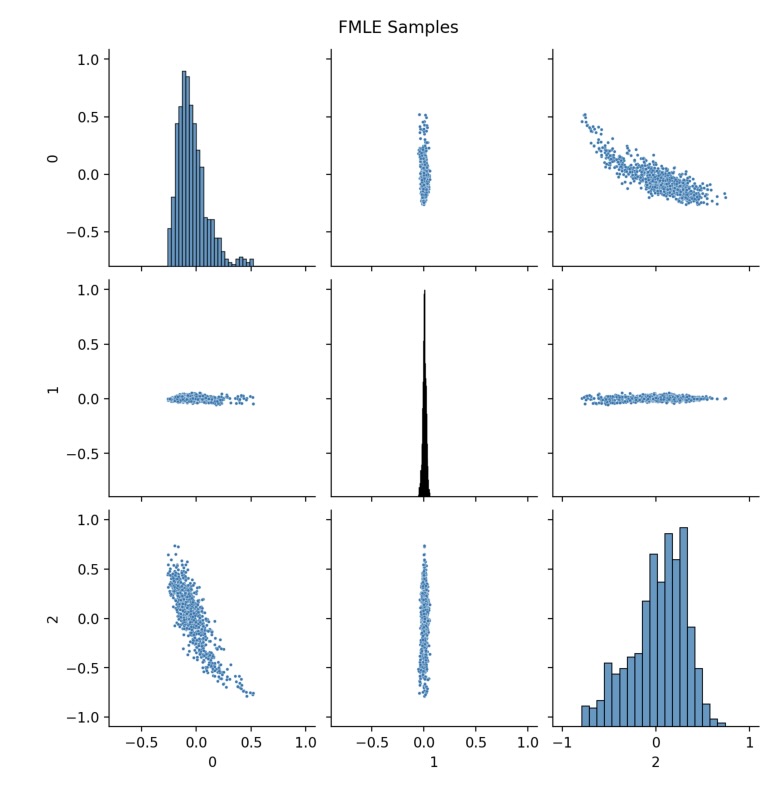}
        \caption{FMLE MCMC samples.}
        \label{fig:nasa_012_fmle}
    \end{subfigure}
    \hfill
    \begin{subfigure}[t]{0.32\textwidth}
        \centering
        \includegraphics[width=\linewidth]{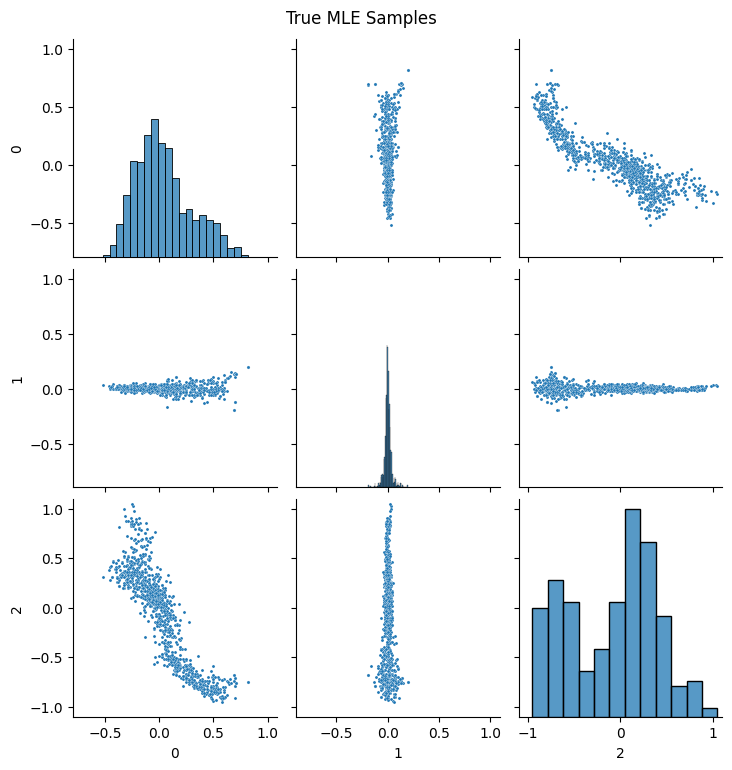}
        \caption{Optimized SN MCMC samples.}
        \label{fig:nasa_012_sn}
    \end{subfigure}

    \caption{Three-dimensional dataset from the LOC flight data. The original
    observations are compared with MCMC samples generated from the FMLE and the
    numerically optimized SN model. Both fitted models capture the principal nonlinear
    dependence patterns, while the likelihood comparison quantifies the improvement
    obtained by optimizing beyond the FMLE initialization.}
    \label{fig:nasa_3d_compare}
\end{figure}

\paragraph{Five-dimensional dataset with a \(2{+}3\) block decomposition.}
We next consider a five-dimensional dataset, again with \(n=1000\) observations, and use polynomial degree \(d=3\). Following Step 1 of Section~\ref{sec:leveraging-low-dependence}, the variables are partitioned into a two-dimensional block \(G_1\) (columns 0 and 4) and a three-dimensional block \(G_2\) (columns 1, 2 and 3). We then carry out the remaining stages of the weak-dependence assembly procedure: first, we fit subgroup SN models to estimate \(P_{11}\) on \(G_1\) and \(P_{22}\) on \(G_2\); second, we optimize the off-diagonal block \(P_{12}\) using the Schur-safe parameterization while holding \(P_{11}\) and \(P_{22}\) fixed. This yields a full precision matrix assembled from subgroup fits together with a cross-block completion step.

Figure~\ref{fig:nasa_5d_indep} shows samples from the assembled model under the independence approximation \(P_{12}=0\), while Figure~\ref{fig:nasa_5d_complete} shows samples after optimizing \(P_{12}\). The independence model preserves the internal structure of each subgroup but does not capture cross-block interactions. After optimizing \(P_{12}\), the completed model recovers salient cross-group dependence patterns while maintaining positive definiteness by construction. Table~\ref{tab:nasa_likelihood_summary} summarizes the log-likelihoods of the two assembled five-dimensional models.

Consistent with the discussion in Section~\ref{subsec:marginal-drift}, activating \(P_{12}\) can slightly deform subgroup marginals relative to their stage-1 optima. This reflects an inherent trade-off between preserving subgroup-optimal marginal fits and improving the quality of the overall joint fit.

\begin{figure}[H]
    \centering
    \includegraphics[width=1\textwidth]{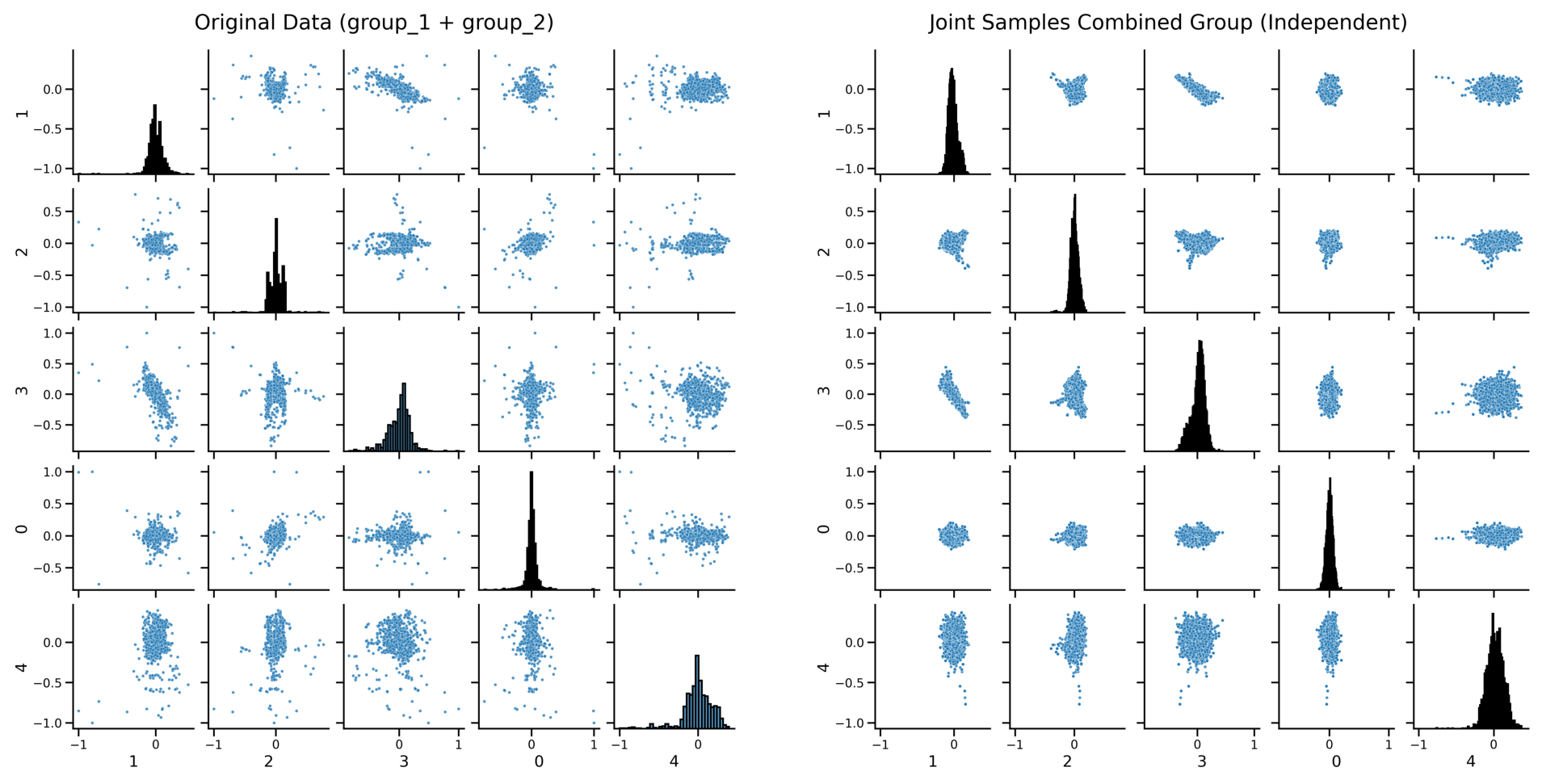}
    \caption{Five-dimensional dataset: true data on the left and samples from the assembled SN model under the independence approximation (\(P_{12}=0\)) on the right.}
    \label{fig:nasa_5d_indep}
\end{figure}

\begin{figure}[H]
    \centering
    \includegraphics[width=1\textwidth]{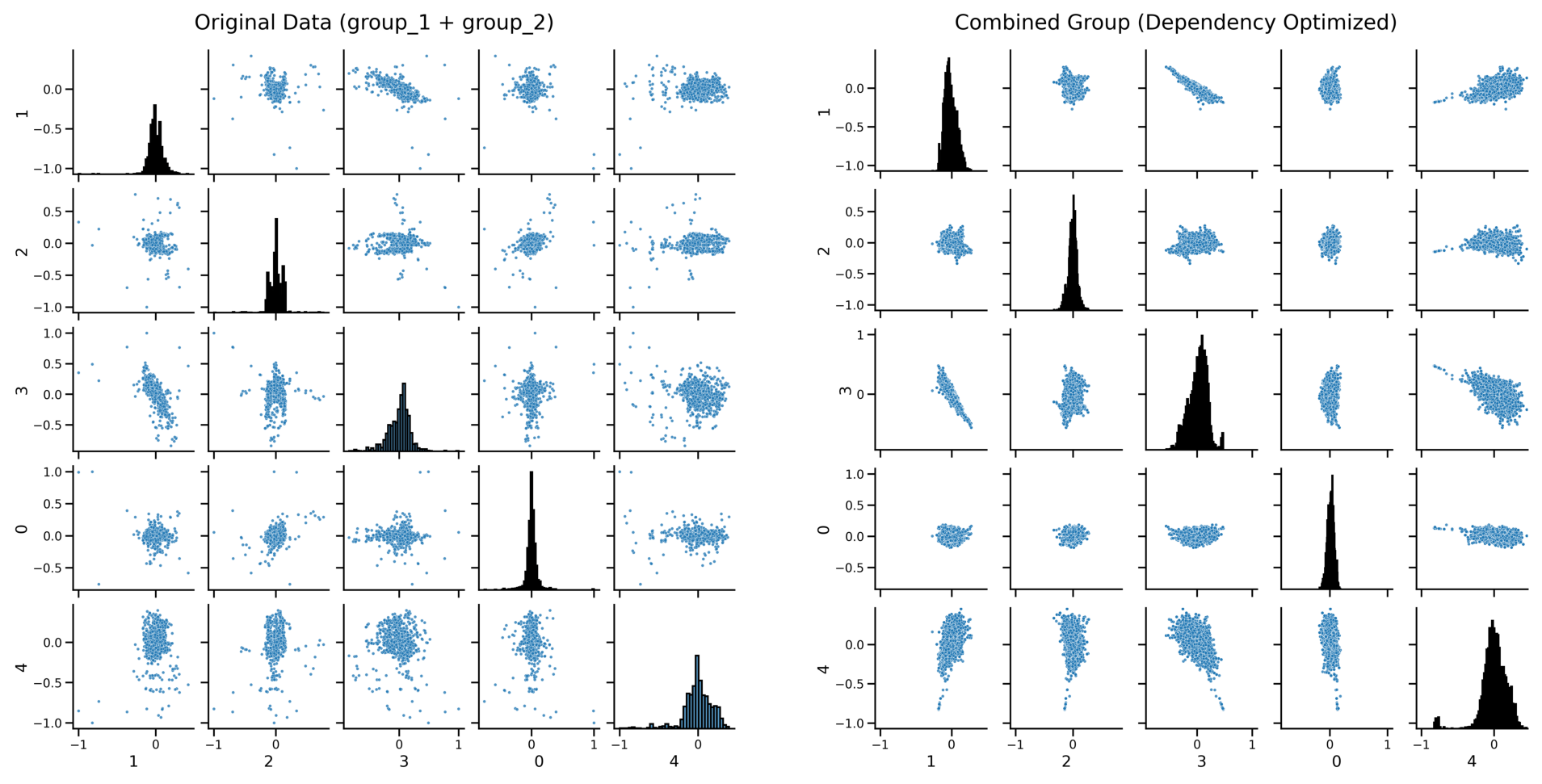}
    \caption{Five-dimensional dataset: assembled SN model after cross-block completion (\(P_{12}\neq 0\)).}
    \label{fig:nasa_5d_complete}
\end{figure}

\begin{table}[H]
    \centering
    \caption{Log-likelihood summary for the NASA numerical example.}
    \label{tab:nasa_likelihood_summary}
    \renewcommand{\arraystretch}{1.15}
\begin{tabular}{@{}>{\raggedright\arraybackslash}p{7.2cm}c@{}}
    \toprule
    \textbf{Model / Dataset} & \textbf{Log-likelihood} \\
    \midrule
    5D assembled SN (independence, \(P_{12}=0\))
        & {\num{1.1e3}} \\
    5D assembled SN (optimized \(P_{12}\))
        & {\num{1.4e3}} \\
    \bottomrule
\end{tabular}
\end{table}

\section{Limitation: Identifiability}\label{sec: limitaion}
Model identifiability refers to the ability to uniquely determine the model's parameters based on the observed data and the specified functional form of the model. A model is said to be identifiable if different parameter values produce distinct distributions of the observed data, ensuring that the parameters can be uniquely recovered given sufficient data. A typical reason for non-identifiability is over-parametrization of the model. The SN model suffers from this deficiency. To demonstrate,  consider the quadratic form \( Z_x^T B Z_x \), where \( Z_x = [1, x, x^2, x^3]^T \) and \( B \) is a symmetric matrix with entries \( b_{ij} \). The quadratic form can be expressed as:
\[
Z_x^T B Z_x = 
\begin{bmatrix}
1 & x & x^2 & x^3
\end{bmatrix}
\begin{bmatrix}
b_{11} & b_{12} & b_{13} & 1 \\
b_{21} & b_{22} & 1 & b_{24} \\
b_{31} & 1 & b_{33} & b_{34} \\
1 & b_{42} & b_{43} & b_{44}
\end{bmatrix}
\begin{bmatrix}
1 \\ x \\ x^2 \\ x^3
\end{bmatrix}.
\]
Alternatively, one can reparameterize \( B \) as:
\[
\begin{bmatrix}
b_{11} & b_{12} & b_{13} & 0 \\
b_{21} & b_{22} & 2 & b_{24} \\
b_{31} & 2 & b_{33} & b_{34} \\
0 & b_{42} & b_{43} & b_{44}
\end{bmatrix}.
\]

Despite these two different representations of \( B \), the resulting coefficient of \( x^3 \) in \( Z_x^T B Z_x \) remains identical. This demonstrates that different parametrizations of \( B \) can produce the same SN.  As a result, the optimization of such models lacks uniqueness, and the optima are not guaranteed to be unique. We would like to note here that this is not typical of the convex reformulation. The SN distribution in the original form is also over parameterized, and different choices of $P$ in (\ref{eq:sn_pdf_old}) may lead to the same distribution. The following developments yield a unique representation of the SN distribution.

\textbf{Notation: }For ease of notation, we will use $m$-dimensional vectors to refer to the rows (and columns) of $B$. Let $\balpha = (\alpha_1, \ldots, \alpha_m)$ denote the index corresponding to the monomial $x_1^{\alpha_1} \cdots x_m^{\alpha_m}$, where $\alpha_i \in \mathbb{N}$ and $1 \leq \sum_{i=1}^m \alpha_i \leq d$. 

Note that $Z_x^T B Z_x$ is a polynomial of degree $2d$ in $m$ variables. Thus, it can be uniquely specified by its $\binom{2d+m}{m}$ coefficients:
\begin{equation*}
    Z_x^T B Z_x = \theta_0 + \sum_{1 \leq \sum_{i=1}^m \beta_i \leq 2d} \theta_{\bbeta} \cdot x_1^{\beta_1} \cdots x_m^{\beta_m},
\end{equation*}
where $\bbeta = (\beta_1, \ldots, \beta_m)$ is the multi-index for the exponents of the monomials, and $\theta_{\beta_1 \cdots \beta_m}$ denotes the unique coefficients of the polynomial $Z_x^T B Z_x$. These coefficients can be deduced for a given matrix $B$. For a given $\bbeta$, consider the set of index pairs:
\begin{equation*}
    I_{\bbeta} = \left\{ (\balpha^{(1)}, \balpha^{(2)}) \;\middle|\; 
    \alpha^{(1)}_1 + \alpha^{(2)}_1 = \beta_1, \; \ldots, \; \alpha^{(1)}_m + \alpha^{(2)}_m = \beta_m \right\}.
\end{equation*}

Finally, we have:
$$\theta_{\bbeta} = \sum_{(\balpha^{(1)}, \balpha^{(2)})\in I_{\bbeta}} B_{\balpha^{(1)}, \balpha^{(2)}}$$
We may also fix $\theta_0$ to be 1, as it will be absorbed in the proportionality constant. Consider the $D = \binom{2d+m}{m}-1$ sized vector:
$$\Theta = \begin{pmatrix}
    \theta_{\bbeta^{(1)}},\ \cdots\ \theta_{\bbeta^{(D)}}
\end{pmatrix}$$
Thus, we may consider this $\Theta$ as the \textit{true} parameter of the problem, and the matrix $B$ as our auxiliary parameter. $\theta$ can be uniquely determined for any given $B$, but multiple $Bs$ can lead to the same $\Theta$. Two models may be considered identical if the $\Theta$'s are equal.

However, this reparameterization does not help to improve the likelihood of the model. On the other hand, adding this as a restriction during estimation will further complicate the model search, making it computationally difficult to arrive at a optima. Thus, we will formulate the problem in terms of the matrix $B$. However, insignificant changes in theta can trigger the termination of the search.

\section{Future Directions}

We conclude by outlining several open problems that arise naturally from the present work and whose resolution would significantly enhance both the theoretical understanding and practical applicability of Sliced Normal (SN) models.

\paragraph{1. Invertible transformations to a standard normal.}
A fundamental question is whether there exists an explicit, invertible transformation
\[
T:\mathbb{R}^p \to \mathbb{R}^p
\]
such that, for $X\sim \text{SN}(\mu,P)$,
\[
Z := T(X) \sim \mathcal{N}(0,I_p).
\]
This would play a role analogous to normalizing flows, but with the transformation derived from the analytical structure of SN distributions rather than from a learned neural network parameterization. Constructing such a map would have multiple benefits. This would enable efficient sampling from SN distributions without resorting to MCMC methods, which can be computationally expensive and sensitive to tuning in high dimensions. A direct transformation-based sampler would be particularly valuable in rare-event estimation and Reliability-Based Design Optimization (RBDO), where repeated sampling from complex distributions is required.

\paragraph{2. Model complexity, monomial growth, and adaptive degree selection.}
A second open problem concerns the growth of model complexity as the polynomial feature degree increases. The number of monomials grows rapidly with both the ambient dimension $p$ and the degree $d$, which limits the range of $(p,d)$ for which reliable estimation and optimization are feasible in practice.

This motivates the construction of a sequence of SN models of increasing degree,
\[
\text{SN}_1 \subset \text{SN}_2 \subset \cdots,
\]
in which inconsequential monomials are systematically identified and removed at each stage. Such a procedure would amount to an adaptive degree-selection or feature-pruning mechanism within the SN framework. A principled solution would allow high-degree SNs to be used in practice while keeping the effective parameter dimension under control, thereby extending the applicability of the framework to higher-dimensional and more complex data settings.


\section*{Acknowledgements}

The authors gratefully acknowledge the support of NASA's Human Research
Program (HRP), which provided the funding for this work.


\bibliographystyle{unsrtnat}
\bibliography{bibliography}


\section{Appendix}

We will use the following well known result from matrix algebra:
\begin{lemma}\label{lemm:schur_comp} Given any symmetric matrix, 
\[
M = \begin{bmatrix} A & B \\ B^\top & C \end{bmatrix},
\]
the following conditions are equivalent:
\begin{enumerate}
    \item \( M \succeq 0 \) (i.e., \( M \) is positive semidefinite).
    \item \( A \succeq 0, \, (I - AA^\dagger)B = 0, \, C - B^\top A^\dagger B \succeq 0 \).
    \item \( C \succeq 0, \, (I - CC^\dagger)B^\top = 0, \, A - BC^\dagger B^\top \succeq 0 \).
\end{enumerate}
\end{lemma}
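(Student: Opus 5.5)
The statement is the standard generalized Schur complement characterization. I will prove $1\Leftrightarrow 2$ in full; $1\Leftrightarrow 3$ then follows by applying the same result to the permuted matrix $\begin{bmatrix} C & B^\top\\ B & A\end{bmatrix}=\Pi^\top M\Pi$, which is congruent to $M$ via a permutation and hence PSD exactly when $M$ is. Throughout I use two facts about the Moore--Penrose pseudoinverse of a symmetric $A\succeq 0$. First, $A^\dagger$ is symmetric and PSD. Second, $AA^\dagger=A^\dagger A$ is the orthogonal projector onto $\operatorname{range}(A)$, so $(I-AA^\dagger)B=0$ is the same as $\operatorname{range}(B)\subseteq\operatorname{range}(A)$, and also $A^\dagger AA^\dagger=A^\dagger$.

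For $2\Rightarrow 1$, I will write down an explicit congruence. Set
\[
L=\begin{bmatrix} I & 0\\ -B^\top A^\dagger & I\end{bmatrix}.
\]
Expanding $L M L^\top$, the off-diagonal block is $B-AA^\dagger B$, which vanishes by the range condition. The lower-right block is
\[
C-B^\top A^\dagger B-B^\top A^\dagger B+B^\top A^\dagger A A^\dagger B=C-B^\top A^\dagger B,
\]
using $A^\dagger AA^\dagger=A^\dagger$. Hence
\[
LML^\top=\begin{bmatrix} A & 0\\ 0 & C-B^\top A^\dagger B\end{bmatrix}\succeq 0.
\]
Since $L$ is invertible with unit lower-triangular structure, $M=L^{-1}(LML^\top)L^{-\top}\succeq 0$.

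For $1\Rightarrow 2$, I will check the three conditions in order.
\begin{enumerate}
\item $A\succeq 0$: test $M$ against vectors $(u,0)$.
\item The range condition: take $u\in\ker A$ and test $M$ against $(tu,v)$. This gives $2t\,u^\top Bv+v^\top Cv\ge 0$ for all real $t$, which forces $u^\top Bv=0$. So $B^\top u=0$ for every $u\in\ker A$, that is, $\operatorname{range}(B)\subseteq(\ker A)^\perp=\operatorname{range}(A)$. By the projector fact this is $(I-AA^\dagger)B=0$.
\item Schur complement: with the range condition now established, the same congruence $LML^\top$ is valid. Its lower-right block $C-B^\top A^\dagger B$ is a principal block of a PSD matrix, hence PSD.
\end{enumerate}

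Most of this is routine algebra. The step that needs care is the range condition in $1\Rightarrow 2$: the quadratic-in-$t$ argument must be done properly, since a linear function of $t$ bounded below on all of $\mathbb{R}$ must have zero slope. One also has to identify $AA^\dagger$ as the orthogonal projector onto $\operatorname{range}(A)$, for example via the spectral decomposition $A=U\Lambda U^\top$ and $A^\dagger=U\Lambda^\dagger U^\top$, so that the kernel/range statement translates into $(I-AA^\dagger)B=0$.
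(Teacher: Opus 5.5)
Your proof is correct and complete. The congruence $LML^\top=\operatorname{diag}(A,\;C-B^\top A^\dagger B)$ handles $2\Rightarrow 1$ and the Schur-complement part of $1\Rightarrow 2$. The linear-in-$t$ argument correctly gives $\ker A\subseteq\ker B^\top$, hence $(I-AA^\dagger)B=0$. The permutation congruence to $\begin{bmatrix} C & B^\top\\ B & A\end{bmatrix}$ turns condition 2 into exactly condition 3.

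The paper gives no proof of this lemma; it quotes it in the appendix as a well-known matrix-algebra fact. It only uses the implication $1\Rightarrow 3$, namely $D\succeq 0$ and $(I-DD^{+})b=0$, in the proof of Theorem~\ref{thm:equivalent}. Your argument is the standard one and fills this in.
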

\subsection{Proofs of Section \ref{sec: Convex_reformulation}}
\begin{proof}{Proof of Theorem \ref{thm:equivalent}}
Given any $\mu,P$ from the sliced normal representation \ref{eq:sn_pdf_old}, we set $B$ as follows:
  \[B(\mu, P)
=\frac 12\begin{bmatrix}
\mu^\top P\mu & -\mu^\top P \\
- P \mu & P
\end{bmatrix}
\]

Suppose:
\[
Z_x = \begin{bmatrix}
    1\\
    \tilde Z_x
\end{bmatrix}.
\]
Observe that:
\begin{align*}
    Z_x^\top BZ_x & =  \frac 12\begin{bmatrix}
    1 &
    \tilde Z_x^\top
\end{bmatrix}\begin{bmatrix}
\mu^\top P\mu & -\mu^\top P \\
- P \mu & P
\end{bmatrix}  \begin{bmatrix}
    1\\
    \tilde Z_x
\end{bmatrix}\\[0.3cm]
& = \frac 12 \left[\tilde Z_x^\top P \tilde Z_x - 2 \tilde Z_x^\top P \mu +\mu^\top P\mu\right] \\[0.3cm]
& = \frac 12 (\tilde Z_x-\mu)^\top P (\tilde Z_x-\mu)
\end{align*}

Hence, we have:
\begin{align*}
f(x| (\mu,P)) &= \frac{1}{c(\mu,P)} \exp\left(-\frac 12 (\tilde Z_x-\mu)^\top P (\tilde Z_x-\mu) \right)\\
    & =  \frac{1}{c(\mu,P)} \exp\left(- Z_x^\top BZ_x \right)
\end{align*}
Thus, with this $B$, the new representation \ref{eq:sn_pdf}  gives the same model as the sliced normal in \ref{eq:sn_pdf_old}. 
Also, by construction, $P$ is PSD $\implies B$ is PSD.
On the contrary given any $B$, we partition $B$ as follows:
 \[B
=\begin{bmatrix}
a_{1\times 1} &b^\top_{1\times m_z} \\
b_{m_z\times 1} & D_{m_z\times m_z}
\end{bmatrix}
\]
we may recover $P$ from the bottom right $m_z\times m_z$ sub-matrix of $B$, i.e , we set $P_B = 2D$. Let $D^+$ denote the generalized inverse, or the moore penrose inverse of $D$. (Since $B$ is positive semi-definite, $D^{-1}$ might not be well-defined). We then set $\mu_B = - D^+b$. Since $B\succeq 0$, using lemma \ref{lemm:schur_comp}.3, we have: 
\begin{itemize}
    \item $D\succeq 0$
    \item $(I-DD^+)b = 0 \implies DD^+b = b \implies -\frac 12 P_B\mu_B = b$
\end{itemize}

We move on to show that this transformation again gives identical models. 
\begin{align}
   (\tilde Z_x-\mu_B)^\top P_B (\tilde Z_x-\mu_B) &= \tilde Z_x^\top P_B \tilde Z_x - 2 \tilde Z_x^\top P_B \mu_B +\mu_B^\top P_B\mu_B\\
   & = 2\tilde Z_x^\top D \tilde Z_x +  \tilde Z_x^\top b + 2b^\top D^+D D^+b\\
   & = 2 Z_x^\top BZ_x +  2b^\top  D^+b - 2a
\end{align}
Thus, 
\begin{align*}
    f(x|B) & = \frac{1}{c(B)} \exp\left(-Z_x^\top B Z_x\right)\\
    &= \frac{e^{b^\top D^+b-a}}{c(B)} \exp\left(-\frac 12  (\tilde Z_x-\mu_B)^\top P_B (\tilde Z_x-\mu_B)\right)\\
\end{align*}

Thus, with this $(\mu_B,P_B)$, the  representation \ref{eq:sn_pdf_old}  gives the same model as the sliced normal in \ref{eq:sn_pdf}, where $c(\mu,P) =  c(B)/e^{b^\top D^+b-a}$. 
\end{proof}

\begin{proof}{Proof of Theorem \ref{thm: CVXOPT}}
We show that the maximum–likelihood search problem
\begin{equation}
\label{eq:MLE_problem_updated}
    \min_{B\succ0}\;
        J(B)\;:=\;
        \frac1n\sum_{i=1}^n Z_{x_i}^{\top}B Z_{x_i}
        \;+\;
        \log\Bigl(
              c(B)
            \Bigr),
        \qquad
        c(B):=\int_{\mathcal X}
               \exp\bigl(-Z_x^{\top}B Z_x\bigr)\,dx,
\end{equation}
is convex in the decision variable
$B\in\bbS_{++}^d$ (the cone of real, symmetric, positive–definite
$d\times d$ matrices).  Since $\bbS_{++}^d$ is itself convex, it
suffices to prove that $J$ is a convex function.

\paragraph{1. Affinity of the empirical term.}
Writing the first summand in trace form,
\[
  \frac1n\sum_{i=1}^n Z_{x_i}^{\top}B Z_{x_i}
  \;=\;
  \tr\Bigl(
      B\,\underbrace{\Bigl(\tfrac1n\sum_{i=1}^n
                           Z_{x_i}Z_{x_i}^{\top}\Bigr)}_{=:S}
     \Bigr),
\]
which is affine (hence convex) in $B$.

\paragraph{2. Convexity of the log‑partition term.}
Set $A(x):=Z_xZ_x^{\top}\succeq0$ and define
\[
  \phi(B)
    :=\log c(B)
    =\log\Bigl(
        \int_{\mathcal X}
        \exp\bigl(-\tr(BA(x))\bigr)\,dx
      \Bigr).
\]
Take any $B_1,B_2\in\bbS_{++}^d$ and $\theta\in(0,1)$.
Then
\begin{align*}
  \exp\bigl(\phi(\theta B_1+(1-\theta)B_2)\bigr)
  &=\int_{\mathcal X}
      \exp\Bigl(
          -\tr\bigl((\theta B_1+(1-\theta)B_2)A(x)\bigr)
        \Bigr)\,dx\\
  &=\int_{\mathcal X}
      \bigl[\exp(-\tr(B_1A(x)))\bigr]^{\theta}\,
      \bigl[\exp(-\tr(B_2A(x)))\bigr]^{1-\theta}\,dx.
\end{align*}
Applying Hölder’s inequality with conjugate exponents
$p=1/\theta$ and $q=1/(1-\theta)$ yields
\[
  \exp\bigl(\phi(\theta B_1+(1-\theta)B_2)\bigr)
  \;\le\;
  \Bigl(\int_{\mathcal X}\exp(-\tr(B_1A(x)))\,dx\Bigr)^{\theta}
  \Bigl(\int_{\mathcal X}\exp(-\tr(B_2A(x)))\,dx\Bigr)^{1-\theta}.
\]
Taking logarithms gives
\[
  \phi(\theta B_1+(1-\theta)B_2)
  \;\le\;
  \theta\,\phi(B_1)+(1-\theta)\,\phi(B_2),
\]
so $\phi$ is convex on $\bbS_{++}^d$.

\paragraph{3. Convexity of the objective.}
The objective $J(B)$ is the sum of the affine term from Step~1 and the convex function $\phi(B)$ from Step~2; hence $J$ is convex. 
\end{proof}

\end{document}